\documentclass[aps,amsmath,amssymb,twocolumn,prl,superscriptaddress]{revtex4-2}

\usepackage{graphicx,tikz}
\usepackage{dcolumn}
\usepackage{bm}
\usepackage{epsfig,hyperref,amsthm}
\usepackage{mathtools}
\usepackage{color,fancybox}
\usepackage{colordvi}
\usepackage{relsize}
\usepackage{accents}
\usepackage{wasysym} 
\usepackage{enumitem}
\usepackage{mathtools,slashed}
\usepackage{times}
\usepackage{newtxmath}
\usepackage[displaymath, mathlines]{lineno}
\usepackage{soul}

\hypersetup{
	colorlinks=true,
	linkcolor=CitingColor,
	citecolor=CitingColor,
	urlcolor=CitingColor
}

\DeclareMathAlphabet\mathbfcal{OMS}{cmsy}{b}{n}
\newcommand{\ket}[1]{\ensuremath{|#1\rangle}}
\newcommand{\bra}[1]{\ensuremath{\langle #1|}}
\newcommand{\braket}[2]{\langle #1|#2\rangle}
\newcommand{\proj}[1]{\ket{#1}\bra{#1}}

\newcommand{\tr}{{\rm tr}}
\newcommand{\norm}[1]{\left\|#1\right\|}

\newcommand{\id}{\mathbb{I}}

\newtheorem{theorem}{Theorem}

\newtheorem{lemma}[theorem]{Lemma}

\newtheorem{result}{Result}

\newtheorem{question}{Question}
\newtheorem{fact}[theorem]{Fact}

\definecolor{nred}{rgb}{0.9,0.1,0.1}
\definecolor{nblack}{rgb}{0,0,0}
\definecolor{nblue}{rgb}{0.2,0.2,0.8}
\definecolor{ngreen}{rgb}{0.2,0.6,0.2}
\definecolor{ublue}{rgb}{0,0,0.5}

\definecolor{pur}{rgb}{0.75,0,0.75}
\definecolor{nngrn}{rgb}{0,0.5,0.5}
\definecolor{CitingColor}{rgb}{0,0.3,1}

\newcommand{\blu}{\color{nblue}}

\newcommand{\CY}[1]{{\color{black}#1}}
\newcommand{\CYtwo}[1]{{\color{black}#1}}
\newcommand{\CYthree}[1]{{\color{black}#1}}

\newcommand{\Sin}{{S_{\rm in}}}
\newcommand{\Sout}{{S_{\rm out}}}
\newcommand{\dSin}{{d_{\rm in}}}
\newcommand{\dSout}{{d_{\rm out}}}
\newcommand{\Choi}{{J}}

\begin{document}
\title{
No-go theorems on simulating uncertainty principle’s signatures
}

\author{Chung-Yun Hsieh}
\affiliation{H. H. Wills Physics Laboratory, University of Bristol, Tyndall Avenue, Bristol, BS8 1TL, United Kingdom}

\author {Minjeong Song}
\email{song.at.qit@gmail.com}
\affiliation{Centre for Quantum Technologies, National University of Singapore, 3 Science Drive 2, Singapore 117543}

\author{Shin-Liang Chen}
\email{shin.liang.chen@email.nchu.edu.tw}
\affiliation{Department of Physics, National Chung Hsing University, Taichung 40227, Taiwan}
\affiliation{Physics Division, National Center for Theoretical Sciences, Taipei 106319, Taiwan}

\date{\today}

\begin{abstract}
Uncertainty principle, one of the most iconic features of quantum mechanics, was originally viewed as a fundamental limitation.
Since the inception of quantum information science, researchers began to use it to achieve quantum advantages. 
To better understand the origin of these advantages, an essential question is: {\em To what extent can the uncertainty principle's signatures be simulated by a single measurement?} 
As a single measurement clearly cannot demonstrate the uncertainty principle, such a simulation, if exists, implies the claimed advantages may either stem from other quantum features, or just be reproducible in a less resourceful way.
In this work, we report a series of noise-robust no-go theorems, showing that strong enough signatures of uncertainty principle {\em cannot} be simulated by a single measurement, even when assisted by quantum pre- or post-processing.
This signature is modelled by {\em complementary instruments}.
We completely characterise complementary instruments by a numerically feasible measure and show that they are necessary and sufficient resources for the advantage in an operational task that aims to unambiguously send classical information. 
\end{abstract}

\maketitle

\section{Introduction}
Uncertainty principle, one of the most profound no-go theorems in quantum theory, sets the impossibility to simultaneously measure two non-commuting physical observables with arbitrary precision.
Initially, it was viewed as a fundamental limitation on what we can ever achieve. 
During the development of quantum information science, researchers started to realise that uncertainty principle's effects can actually act as {\em resources} for different operational tasks.
Nowadays, different notions of uncertainty principle's signatures~\cite{Guhne2023RMP,Busch2006,Busch2007,Kiukas2019,Saha2020PRA,Huang2024PRA,Hsieh-IP,Rolino2025arXiv,Starke2026review,Heinosaari2014,Heinosaari2016,Buscemi2023Quantum,Ghai2025,Ji2023,Hsieh2024PRL,Ghai2025} have been found as resources in, e.g., violating Bell inequalities~\cite{Brunner2014RMP,Bell1964}, demonstrating quantum steering~\cite{XiangPRXQ2022,Cavalcanti2016,Uola2020RMP,Uola2015PRL} as well as different types of device-independent tasks~\cite{Ku2022NC,Ku2022PRXQ,Ku2023,Hsieh2023}, discrimination tasks~\cite{Ji2023,Skrzypczyk2019PRL,Takagi2019,Uola2019PRL,Hsieh2025PRA-3}, exclusion tasks~\cite{Saha2020PRA,Huang2024PRA,Hsieh-IP,Ducuara2020PRL,Uola2020PRL,Carmeli2019PRL,Oszmaniec2019Quantum}, and thermodynamic work extraction~\cite{Hsieh2024PRL,Beyer2019PRL,Ji2022PRL,Biswas2025PRL,Hsieh2025arXiv}, to name a few.  
Hence, rather than a limitation, uncertainty principle enables us to achieve what we could never achieve without its help.

Ultimately, uncertainty principle is a quantum feature that can {\em only} occur when there are more than {\em two} measurement processes.
An important question is: 
\begin{center}
{\em When can we simulate it by a single measurement device?}
\end{center}
Foundationally, this is a crucial problem, as its answer can uncover properties shared by two measurement processes that are responsible for demonstrating uncertainty principle. 
Also, it can tell us when we can really view two measurement processes as genuinely distinct.
Moreover, as uncertainty principle's signatures are powerful resources, it is crucial to know whether this resource can be achieved in a ``cheaper'' way; namely, by a setting involving only a single measurement process.

\begin{figure}[h]
\scalebox{0.725}{\includegraphics{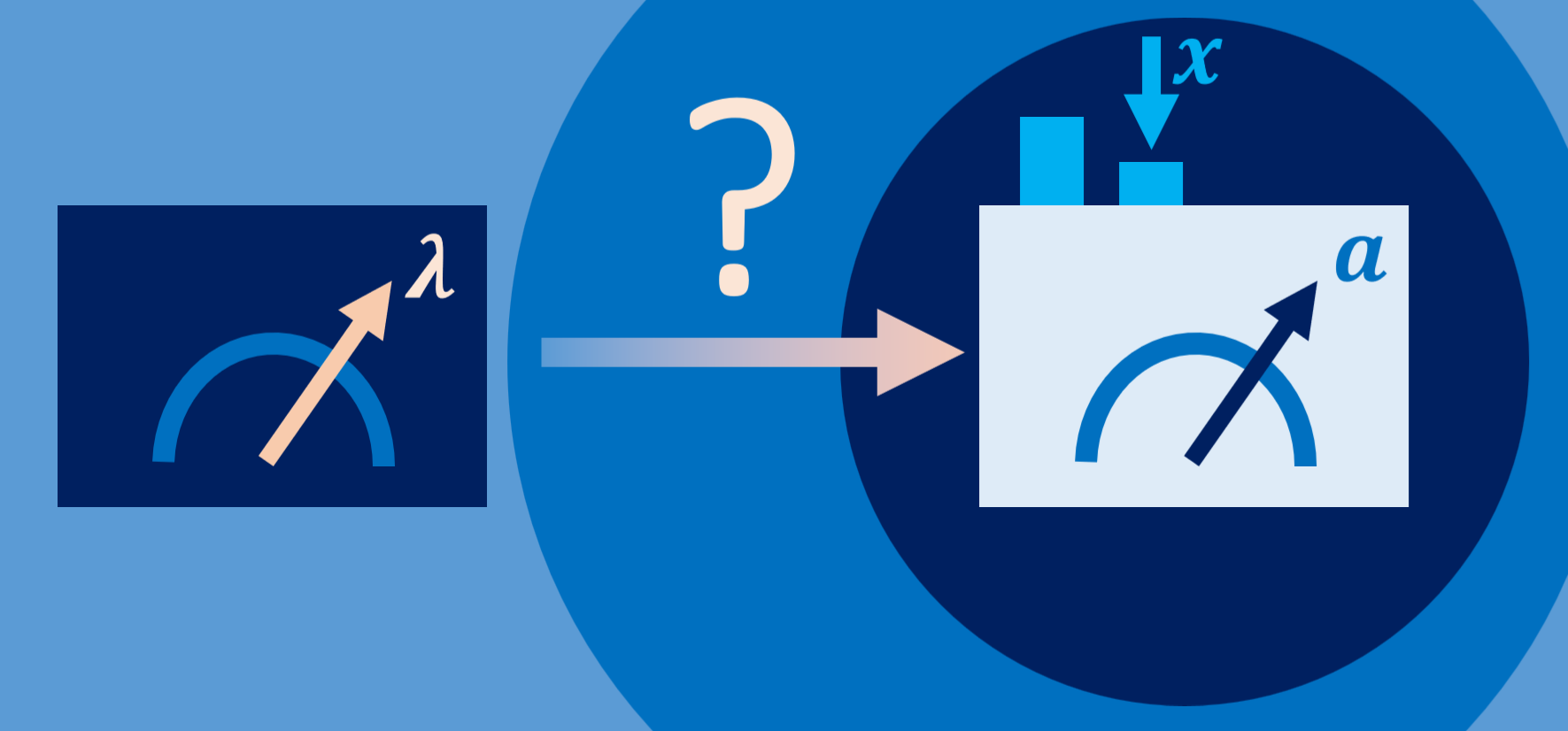}}
\caption{
{\bf Main question.}
When can uncertainty principle's signatures, as carried by a pair of instruments, be simulated by a single instrument? 
In the figure, the left box is an instrument with classical outcome $\lambda$, and the right box implements the $x$-th instrument (with classical outcomes $a$) when we press the $x$-th button.
}
\label{Fig}
\end{figure}

In this work, we report a series of general no-go results on the impossibility of simulating the uncertainty principle's signatures by a single measurement device 
(see also Fig.~\ref{Fig}).
Using the mathematical notion termed {\em instruments}~\cite{QMeasurement-book} to model measurement processes, we consider two operational ways of characterising uncertainty principle, termed {\em incompatible instruments} and {\em complementary instruments}.
The former describes the impossibility of simulating two instruments by {\em joint measurability}~\footnote{Incompatibility considered in this work is also known as {\em traditional incompatibility} in the literature (see, e.g., Ref.~\cite{Buscemi2023Quantum} for a brief review). This is different from the {\em parallel incompatibility}~\cite{Mitra2022PRA,Mitra2023PRA,Leppajarvi2024Quantum}, which is a notion more relevant to the no-cloning theorem.}, which refers to a single instrument with classical post-processing~\cite{Heinosaari2014,Heinosaari2016,Buscemi2023Quantum,Ghai2025,Ji2023,Hsieh2024PRL}. 
The latter is formalised in this work, which extends the operational notions of complementarity~\cite{Busch2006,Busch2007,Kiukas2019,Saha2020PRA,Huang2024PRA,Hsieh-IP,Guhne2023RMP,Rolino2025arXiv,Starke2026review} to instruments, describing the impossibility for two instruments' outputs to overlap.
These seemingly distinct notions are actually closely related.
We show that whether instruments are complementary can be fully determined by a semi-definite program (SDP)~\cite{watrous_2018,SDP-textbook} (Result~\ref{Result:Characterising complementarity}).
This further implies that complementary instruments must also be incompatible (Result~\ref{Result:relation}), meaning that joint measurability cannot simulate them.
We then proceed to explore more advanced simulation models beyond joint measurability and report a series of noise-robust no-go results. 
We show that all rank-one projective complementary instruments, even being noisy, still cannot be simulated by joint measurability assisted by quantum pre- or post-processing (Results~\ref{Result:no-go}-\ref{Result:no-go2}).
Hence, strong enough uncertainty principle's signatures can be pretty hard for a single instrument to simulate. 
Finally, we consider an unambiguous classical communication task for instruments and show that it is possible only via complementary instruments (Result~\ref{Result:operational meaning}).
This thus operationally separates complementarity and incompatibility, uncovering complementarity's advantage in sending classical information.

\section{Results}
\subsection{Channel, filters, and instruments}

To start with, {\em channels} are completely-positive trace-preserving linear maps~\cite{QIC-book}.
A channel $\mathcal{E}$ describes quantum state's deterministic evolution: It maps an initial state $\rho$ to a final state $\mathcal{E}(\rho)$.
A probabilistic evolution is described by a {\em filter}, i.e., a completely-positive trace-non-increasing linear map.
A filter $\mathcal{L}$ maps an initial state $\rho$ to the final state $\mathcal{L}(\rho)/{\rm tr}[\mathcal{L}(\rho)]$ with the probability ${\rm tr}[\mathcal{L}(\rho)]$.
Now, an {\em instrument} is a set $\{\mathcal{E}_a\}_a$ such that each $\mathcal{E}_a$ is a filter, and $\sum_a\mathcal{E}_a$ is a channel~\cite{QMeasurement-book}.
It describes the process that maps an initial state $\rho$ to the $a$-th final state $\mathcal{E}_a(\rho)/{\rm tr}[\mathcal{E}_a(\rho)]$ with the probability ${\rm tr}[\mathcal{E}_a(\rho)]$.
From here, one can observe that instruments naturally describe general measurement processes by including both the classical outcomes ``$a$'' and the updated states ``$\mathcal{E}_a(\rho)/{\rm tr}[\mathcal{E}_a(\rho)]$''.
They have been lately applied to study quantum features of measurement processes~\cite{Hsieh2026PRA,Hsieh2024PRL,Ji2023,Cobucci2026,Khandelwal2025PRL,Buscemi2023Quantum,Ghai2025,Mitra2022PRA,Mitra2023PRA,Leppajarvi2024Quantum} beyond {\em positive operator-valued measures} (POVMs)~\footnote{Formally, a POVM is a set of operators $\{E_a\}_a$ such that $E_a\ge0$ $\forall\,a$ and $\sum_aE_a=\id$~\cite{QIC-book}. 
For an initial state $\rho$, it produces the classical outcome $a$ with probability ${\rm tr}(\rho E_a)$.}. 
Here, we aim to utilise instruments to study uncertainty principle's signatures.

\subsection{Uncertainty principle's signature from incompatible instruments}
Since now, we use $\mathbfcal{E}=\{\mathcal{E}_{a|x}\}_{a,x}$ to denote a set of instruments, i.e., for each $x$, the set $\{\mathcal{E}_{a|x}\}_{a}$ is an instrument. 
Then $\mathbfcal{E}=\{\mathcal{E}_{a|x}\}_{a,x}$ is {\em jointly measurable} (see, e.g., Refs.~\cite{Heinosaari2014,Heinosaari2016,Buscemi2023Quantum,Ghai2025,Ji2023,Hsieh2024PRL,Mitra2022PRA,Mitra2023PRA,Ghai2025,Leppajarvi2024Quantum})
if there exist conditional probabilities $\{P(a|x,\lambda)\}_{a,x,\lambda}$ and a single instrument $\{\mathcal{G}_\lambda\}_\lambda$ such that
\begin{align}\label{Eq:JM}
\mathcal{E}_{a|x} = \sum_\lambda P(a|x,\lambda)\mathcal{G}_\lambda\quad\forall\,a,x.
\end{align}
Namely, $\mathbfcal{E}$ can be simulated by a single instrument assisted by classical post-processing (denoted as $\mathbfcal{E}\in{\bf JM}$).
Measurement processes modelled by jointly-measurable instruments can be measured simultaneously via a single measurement process ($\{\mathcal{G}_\lambda\}_\lambda$), meaning that there is only one genuinely quantum measurement.
We then say $\mathbfcal{E}$ is {\em incompatible} if it is not jointly measurable (i.e., $\mathbfcal{E}\notin{\bf JM}$).
\CYthree{Physically, this means that} $\mathbfcal{E}$ must include at least two genuinely different quantum measurement processes that cannot be implemented simultaneously.
In this sense, incompatibility has been viewed as a signature of uncertainty principle~\cite{Guhne2023RMP}. See also \CY{Appendix A} for discussions.

Note that if two instruments in $\mathbfcal{E}$ are already incompatible, then $\mathbfcal{E}$ is automatically incompatible.
Hence, since now, we focus on a pair of instruments (i.e., $x=0,1$), as the results should also apply to a set with more instruments.

\subsection{Uncertainty principle's signature from complementary instruments}
There is an alternative form of uncertainty principle's signature,
termed {\em complementary instruments}, as we introduce now.
Conceptually, it describes when two instruments' quantum outputs {\em cannot overlap,} suggesting these two instruments cannot be known simultaneously.
To operationally define this, we follow the approach of Ref.~\cite{Hsieh-IP}.
For a pair of instruments \CYtwo{$\mathbfcal{E} = \{\mathcal{E}_{a|x}\}_{a,x}$ with $x=0,1$,}
we say it has \CY{\em non-vanishing overlap} if there exist \CYtwo{output index pair $(a,b)$
and a \CY{non-vanishing} filter $\mathcal{L}\neq0$ 
such that (see also Fig.~\ref{Fig:overlap})}
\begin{align}\label{Eq:common part 01}
&\CYtwo{\mathcal{E}_{a|0}-\mathcal{L}\;\;\&\;\;\mathcal{E}_{b|1}-\mathcal{L}\;\text{are both completely-positive.}}
\end{align}
Namely, $\mathcal{E}_{a|0}$ and $\mathcal{E}_{b|1}$ have certain information (via measurement statistics) that can be simultaneously obtained via $\mathcal{L}$ (the ``overlap'').
We then say $\mathbfcal{E}$ is {\em complementary} if it has {\em vanishing} overlap, \CYtwo{meaning that
no outcome pair carries ``common parts'' described by a single filter.
This thus gives us a novel way to identify uncertainty principle's signature.}

\begin{figure}[h]
\scalebox{0.725}{\includegraphics{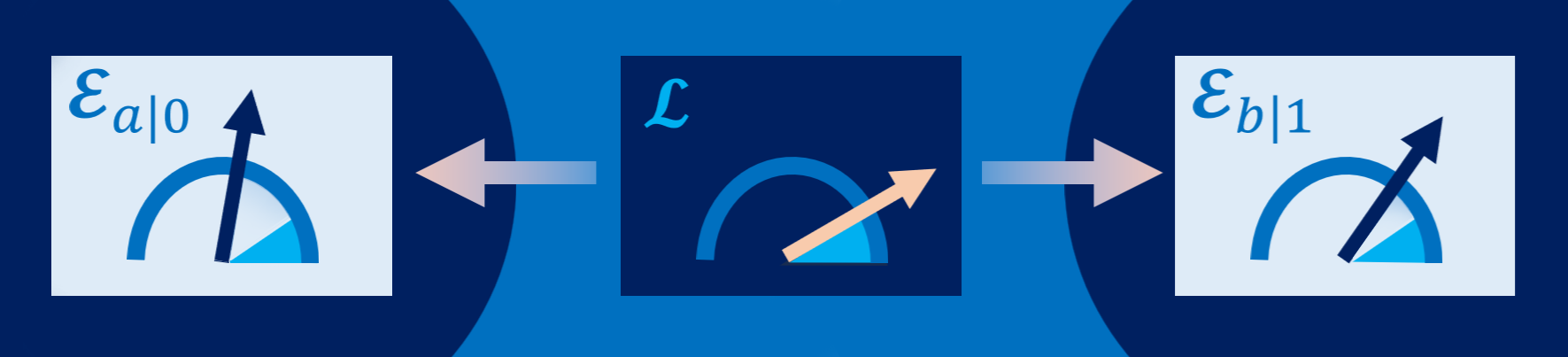}}
\caption{
{\bf Non-vanishing overlap.} \CYtwo{Equation~\eqref{Eq:common part 01} implies that \mbox{$(\mathcal{E}_{a|0}\otimes\mathcal{I}_A)(\rho)\ge(\mathcal{L}\otimes\mathcal{I}_A)(\rho)$} and $(\mathcal{E}_{b|1}\otimes\mathcal{I}_A)(\rho)\ge(\mathcal{L}\otimes\mathcal{I}_A)(\rho)$ for every finite-dimensional auxiliary system $A$ and joint state $\rho$ (here, $\mathcal{I}_A$ is the identity channel acting on $A$). 
Hence, both the classical outcomes and quantum outputs of $\mathcal{E}_{a|0}$ and $\mathcal{E}_{b|1}$ can be estimated by applying a single instrument containing $\mathcal{L}$ as an element~\cite{FootnoteFig}.
} 
}
\label{Fig:overlap}
\end{figure}

\subsection{Characterising complementary instruments}
\CYtwo{
To see how complementarity and incompatibility of instruments are closely related, we first need to characterise the former.
Let $S$ \CYthree{with computational basis \CYthree{$\{\ket{n}\}_{n=0}^{d-1}$ (\mbox{$d<\infty$}})} be $\mathbfcal{E}$'s input system, $S'$ be a system identical to $S$}, and
\mbox{$
\ket{\Phi^+}_{SS'}\coloneqq(1/\sqrt{d})\sum_{n=0}^{d-1}\ket{nn}_{SS'}
$}
be a maximally entangled state
\CYtwo{(\CYthree{since now,} if needed, subscripts denote systems where the objects live in or act on)}.
Then, in \CY{Supplemental Material I,} we prove that 
\CYthree{[below, we define 
\mbox{$\Choi_{a|x}^{(\mathbfcal{E})}\coloneqq(\mathcal{E}_{a|x}\otimes\mathcal{I}_{S'})(\proj{\Phi^+}_{SS'})$}
$\forall\,a,x$]:}

\begin{result}\label{Result:Characterising complementarity}
A pair of instruments $\mathbfcal{E}$ is complementary if and only if 
$
\max_{a,b\in\CYthree{\{0,1,...,d-1\}}}C_{(a,b)}(\mathbfcal{E})=0,
$
where
\begin{align}\label{Eq:SDP}
C_{(a,b)}(\mathbfcal{E})\coloneqq\max\left\{{\rm tr}(P)\,\Big|\,P\ge0,\,\Choi^{(\mathbfcal{E})}_{a|0}\ge P,\,\Choi^{(\mathbfcal{E})}_{b|1}\ge P\right\}.
\end{align}
\end{result}
This thus provides a complete characterisation of complementary instruments.
Importantly, Eq.~\eqref{Eq:SDP} can be solved by SDP and thus can be \CYtwo{checked efficiently.}
Also, if $\max_{a,b}C_{(a,b)}(\mathbfcal{E})$ is positive but sufficiently small, then $\mathbfcal{E}$ is {\em almost} complementary, meaning that it may be very costly to detect the overlap it carries.
%
\CYtwo{Notably, as detailed in \CY{Appendix B}, Result~\ref{Result:Characterising complementarity}} implies that complementary instruments include \CYtwo{complementary POVMs}~\cite{Guhne2023RMP,Hsieh-IP} as special cases. 

\subsection{Complementarity implies incompatibility}
\CYthree{Via Result~\ref{Result:Characterising complementarity}, we can now relate two different faces of uncertainty principle:

}
\begin{result}\label{Result:relation}
If a pair of instruments $\mathbfcal{E}$ is complementary, then it is incompatible; namely,
$
\mathbfcal{E}\notin{\bf JM}.
$
\end{result}
See \CY{Supplemental Material~II} for the proof.
Hence, 
\CYtwo{for instruments,
complementarity is a strong form of incompatibility, consistent with the case of POVMs~\cite{Guhne2023RMP,Starke2026review}.
Result~\ref{Result:relation} implies that complementary instruments can indeed reveal uncertainty principle (see \CY{Appendix C}) and} a direct no-go result: {\em complementary instruments cannot be simulated by \CYtwo{joint measurability}
} [Eq.~\eqref{Eq:JM}].
In fact, as we shall demonstrate later, 
\CYthree{joint measurability can hardly simulate complementarity}
even with the help of quantum pre- or post-processing.

\subsection{Complementarity is strictly stronger than incompatibility for instruments}
Notably, the converse of Result~\ref{Result:relation} is not true.
To see an example, consider the pair of two-outcome qubit instruments
$\mathbfcal{E}^{\rm Pauli}\coloneqq\{\mathcal{E}_{a|x}^{\rm Pauli}\}_{a,x=0,1}$ 
defined by:
\begin{align}\label{Eq:Pauli Z}
&\mathcal{E}^{\rm Pauli}_{0|0}(\cdot)=\proj{0}(\cdot)\proj{0},\;\mathcal{E}^{\rm Pauli}_{1|0}(\cdot)=\proj{1}(\cdot)\proj{1},\\\label{Eq:Pauli X}
&\mathcal{E}^{\rm Pauli}_{0|1}(\cdot)=\proj{+}(\cdot)\proj{+},\;\mathcal{E}^{\rm Pauli}_{1|1}(\cdot)=\proj{-}(\cdot)\proj{-},
\end{align}
where $\ket{\pm}\coloneqq(\ket{0}\pm\ket{1})/\sqrt{2}$.
They are Pauli \CYthree{$Z$ and $X$} measurements, usually viewed as the maximally incompatible pair. 
Now, consider their noisy version,
$\mathbfcal{P}^{(\epsilon)}\coloneqq\big\{\mathcal{P}_{a|x}^{(\epsilon)}\big\}_{a,x}$, with
\begin{align}\label{Eq:noisy Pauli}
\mathcal{P}_{a|x}^{(\epsilon)}(\cdot)\coloneqq(1-\epsilon)\mathcal{E}^{\rm Pauli}_{a|x}(\cdot)+\epsilon\mathcal{D}^{\rm noise}_{a}(\cdot)\quad\forall\,a,x,
\end{align}
where $0\le\epsilon\le1$ indicates the amount of depolarising noise represented by 
the instrument \mbox{$\{\mathcal{D}_{a}^{\rm noise}(\cdot)\coloneqq(\id/4){\rm tr}(\cdot)\}_{a=0,1}$},
namely, 
the quantum output is always maximally mixed $\id/2$, with uniform probability of obtaining each classical outcome.

Result~\ref{Result:Characterising complementarity} implies that $\mathbfcal{P}^{(\epsilon)}$ is complementary if and only if $\epsilon=0$,
meaning that perfect complementary pair can be extremely fragile to noise.
Also, as shown in \CY{Appendix D},
\begin{align}\label{Eq:noisy Pauli JM bound}
\text{\CY{$0\le\epsilon<(\sqrt{2}-1)/\sqrt{2}$}\;
implies \;$\mathbfcal{P}^{(\epsilon)}\notin{\bf JM}$.}
\end{align}
Hence, the entire open interval \CY{$0<\epsilon<(\sqrt{2}-1)/\sqrt{2}$} provides examples of incompatible instruments that are {\em not} complementary, i.e., complementarity is indeed {\em strictly stronger} than incompatibility.
Eq.~\eqref{Eq:noisy Pauli JM bound} suggests that the relation between complementarity and incompatibility is similar to the one between ``pure state entanglement'' and ``mixed state entanglement.''
This also illustrates why complementarity serves as a strong notion of uncertainty principle's signature.

\subsection{No-go theorem on simulating complementarity by \CY{classical-to-quantum signalling}}
Since complementary instruments are all incompatible (Result~\ref{Result:relation}), they cannot be simulated by joint measurability [Eq.~\eqref{Eq:JM}].
A natural question is: 
\begin{center}
{\em What if we give the single instrument more assistants?}
\end{center}
Clearly, by allowing additional operations in Eq.~\eqref{Eq:JM}, one can simulate (possibly many) incompatible instruments.
It is then crucial to know whether complementary instruments, which represent a strong signature of uncertainty principle, are still impossible to simulate.
Inspired by Ref.~\cite{Khandelwal2025PRL}, we consider an upgraded simulation model that incorporates Eq.~\eqref{Eq:JM} and {\em quantum} post-processing.
We say $\mathbfcal{E}$ is {\em jointly measurable via quantum post-processing}, denoted as $\mathbfcal{E}\in{\bf JM}_{\rm post}$, if $\mathbfcal{E}$ can be simulated by a single instrument via quantum and classical post-processing,
\CYthree{i.e.,} if there are channels $\{\mathcal{M}_x\}_x$, conditional probabilities $\{P(a|x,\lambda)\}_{a,x,\lambda}$, and a single instrument $\{\mathcal{G}_\lambda\}_\lambda$ such that
\begin{align}\label{Eq:post-processing}
\mathcal{E}_{a|x} = \sum_\lambda P(a|x,\lambda)\mathcal{M}_x\circ\mathcal{G}_\lambda\quad\forall\,a,x.
\end{align}
This model has a clear physical meaning: It is jointly measurable instruments [Eq.~\eqref{Eq:JM}] assisted by {\em classical-to-quantum signalling}. Namely, we allow the classical input $x$ to alter (i.e., to {\em signal}) the quantum output of the measurement device via some $x$-dependent quantum channel $\mathcal{M}_x$.

Notably, Eq.~\eqref{Eq:post-processing} can simulate many incompatible instruments.
In fact, for every $\mathbfcal{L}\in{\bf JM}$, we have $\sum_a\mathcal{L}_{a|x}=\sum_\lambda\mathcal{G}_\lambda$ $\forall\,x$ for a single instrument $\{\mathcal{G}_\lambda\}_\lambda$. 
Then, for channels $\mathcal{M}_x$'s with 
$(\mathcal{M}_0-\mathcal{M}_1)\circ\sum_\lambda\mathcal{G}_\lambda\neq0$,
in \CY{Appendix E} we show that
\begin{align}\label{Eq:Q post-processing simulation example}
\{\mathcal{M}_x\circ\mathcal{L}_{a|x}\}_{a,x}\in{\bf JM}_{\rm post}\setminus{\bf JM}.
\end{align}
Namely, 
$\{\mathcal{M}_x\circ\mathcal{L}_{a|x}\}_{a,x}$ is incompatible (as it is outside ${\bf JM}$), but it can be simulated by Eq.~\eqref{Eq:post-processing}.
In other words, these incompatible instruments can be ``faked'' by incorporating post-processing quantum channels with a single instrument.


{\em What instruments, if exist, cannot be simulated by Eq.~\eqref{Eq:post-processing}?}
It turns out that complementarity is the key to {\em negate} the simulation model in Eq.~\eqref{Eq:post-processing}.
To see this, first, we say $\mathbfcal{E}$ is {\em rank-one projective} if it is of the form
\begin{align}\label{Eq: Def rank-one projective}
\mathcal{E}_{a|x}(\cdot) = \proj{\psi_a^{(x)}}(\cdot)\proj{\psi_a^{(x)}}\quad\forall\;a,x,
\end{align}
where, for each $x$, the set $\{\ket{\psi_a^{(x)}}\}_a$ is an orthonormal basis.
Such instruments describe the most natural (and noiseless) experimentally feasible measurements.
\CYthree{In fact,} it is already impossible for Eq.~\eqref{Eq:post-processing} to simulate a complementary pair in this subclass.
More precisely, we have that

\begin{result}\label{Result:no-go}
Let $\mathbfcal{E}$ be a pair of rank-one projective instruments.
If $\mathbfcal{E}$ is complementary, then 
$
\mathbfcal{E}\notin{\bf JM}_{\rm post}.
$
\end{result}
See \CY{Supplemental Material III} for the proof.
Result~\ref{Result:no-go} thus tells us that, despite Eq.~\eqref{Eq:post-processing} can indeed simulate lots of incompatible instruments \CYthree{[Eq.~\eqref{Eq:Q post-processing simulation example}]}, strong enough incompatibility (e.g., from rank-one projective complementary instruments) can still demonstrate strong signatures of uncertainty principle that cannot be reproduced by Eq.~\eqref{Eq:post-processing}.
%
Importantly, Result~\ref{Result:no-go} is still valid when the system is subject to noise.
To illustrate this, it suffices to consider an example. 
Let us again use the noisy Pauli instruments $\mathbfcal{P}^{(\epsilon)}$ [Eq.~\eqref{Eq:noisy Pauli}].
Then, as detailed in \CY{Supplemental Material~IV}, we have that

\begin{result}\label{Result:noise robustness}
$0\le\epsilon<2/\left(9+4\sqrt{2}\right)$ implies $\mathbfcal{P}^{(\epsilon)}\notin{\bf JM}_{\rm post}$.
\end{result}
This illustrates that Result~\ref{Result:no-go} is indeed robust to noise.
We remark that $2/(9+4\sqrt{2})\approx0.1365$, and from Eq.~\eqref{Eq:noisy Pauli JM bound} we have 
\CY{$(\sqrt{2}-1)/\sqrt{2}\approx0.2929$}.
\CY{Hence, 
\CYthree{more} than $46\%$ of the interval for the parameter $\epsilon$ to admit incompatible $\mathbfcal{P}^{(\epsilon)}$ \CYthree{[in Eq.~\eqref{Eq:noisy Pauli JM bound}]}
are impossible to simulate via ${\bf JM}_{\rm post}$.}


\subsection{Separating the effects of signalling and uncertainty principle in incompatible instruments}
In fact, Result~\ref{Result:no-go} addresses a 
\CYthree{subtle}
conceptual question about the relation between incompatibility, uncertainty principle, and signalling.
To better illustrate this, consider $\mathbfcal{E}^{\rm Pauli}$ 
\CYthree{[Eqs.~\eqref{Eq:Pauli Z} and~\eqref{Eq:Pauli X}]}.
As we discussed previously, \mbox{$\mathbfcal{E}^{\rm Pauli}\notin{\bf JM}$} [see also Eq.~\eqref{Eq:noisy Pauli JM bound}].
Nevertheless, one can check that \mbox{$\sum_a\mathcal{E}^{\rm Pauli}_{a|0}\neq\sum_a\mathcal{E}^{\rm Pauli}_{a|1}$}; namely, the classical input $x=0,1$ can {\em signal} the average quantum channel.
This classical-to-quantum signalling can already ensure $\mathbfcal{E}^{\rm Pauli}$ is outside ${\bf JM}$ [see Eq.~\eqref{Eq: simple but important fact} in \CY{Appendix~E}]. \CYthree{A conceptual question is:} 
{\em What underpins $\mathbfcal{E}^{\rm Pauli}$'s incompatibility? Is it uncertainty principle, or just signalling?}
Thanks to Result~\ref{Result:no-go}, which tells us \mbox{$\mathbfcal{E}^{\rm Pauli}\notin{\bf JM}_{\rm post}$}, we conclude that joint measurability plus 
\CYthree{classical-to-quantum signalling (modelled by $x$-dependent quantum post-processing)}
are still {\em impossible} to simulate $\mathbfcal{E}^{\rm Pauli}$. 
This
\CYthree{evidences} that incompatibility of $\mathbfcal{E}^{\rm Pauli}$ is 
\CYthree{a consequence of} uncertainty principle.
Hence, introducing the simulation model ${\bf JM}_{\rm post}$ enables us to better identify uncertainty principle's role in incompatible instruments.

\subsection{No-go theorem on simulating complementarity by \CY{pre-measurement manipulation}}
After studying ${\bf JM}_{\rm post}$ with quantum post-processing, it is natural to ask:
{\em What if we instead allow quantum pre-processing to joint measurability?}
Unlike the quantum post-processing, which has a natural interpretation as the extra ability to signal instruments' quantum output, quantum {\em pre-}processing means something different: It is the extra ability to manipulate the instruments' {\em quantum input}.
In this work, we consider unital quantum pre-processing, which covers a wide range of practically relevant channels (e.g., unitary channels and their mixtures).
We say $\mathbfcal{E}$
is {\em jointly measurable via unital quantum pre-processing}, 
denoted as $\mathbfcal{E}\in{\bf JM}_{\rm pre}^{\rm unital}$, if it can be simulated by joint measurability [Eq.~\eqref{Eq:JM}] assisted by
pre-processed unital channels; i.e., there are unital channels $\{\mathcal{N}_x\}_x$, conditional probabilities $\{P(a|x,\lambda)\}_{a,x,\lambda}$, and an instrument $\{\mathcal{G}_\lambda\}_\lambda$ such that
\begin{align}\label{Eq:pre-processing}
\mathcal{E}_{a|x} = \sum_\lambda P(a|x,\lambda)\mathcal{G}_\lambda\circ\mathcal{N}_x\quad\forall\,a,x.
\end{align}
Then, similar to Results~\ref{Result:no-go} and~\ref{Result:noise robustness}, we have the following result:
%

\begin{result}\label{Result:no-go2}
Let $\mathbfcal{E}$ be a pair of rank-one projective instruments.
If $\mathbfcal{E}$ is complementary, then 
$
\mathbfcal{E}\notin{\bf JM}_{\rm pre}^{\rm unital}.
$
Moreover, \mbox{$0\le\epsilon<2/\left(9+4\sqrt{2}\right)$} implies $\mathbfcal{P}^{(\epsilon)}\notin{\bf JM}_{\rm pre}^{\rm unital}$. 
\end{result}
See \CY{Supplemental Material V} for the proof.
Hence, strong enough incompatibility cannot be simulated by joint measurability assisted by quantum pre-processing, and this no-go result is again robust to noise.

Finally, one may ask what if we allow {\em both} $x$-dependent quantum pre- {\em and} post-processing in the simulation model, i.e., 
$
\mathcal{E}_{a|x} = \sum_\lambda P(a|x,\lambda)\mathcal{M}_x\circ\mathcal{G}_\lambda\circ\mathcal{N}_x
$.
We remark that this is already (physically and practically) equivalent to the ability to perform {\em multiple} measurement processes.
For instance, this is one way to implement 
$\mathbfcal{E}^{\rm Pauli}$ experimentally.
Here, we aim to study whether an agent can simulate uncertainty principle's signature {\em without} really having such an ability.
Hence, conceptually, we should not allow this agent to perform {\em both} quantum pre- and post-processing, as it would give the agent this ability, 
thereby trivialising the simulation problem.



\subsection{Operationally separating complementarity and incompatibility for instruments}
Finally, to better understand the difference between complementarity and incompatibility, we ask:
\begin{center}
{\em What operational advantages can only be achieved by complementary instruments?}
\end{center}
Such operational advantages can further help us distinguish incompatibility and complementarity.
To this end, we consider the following task, which aims to unambiguously send one bit of classical information.

Consider a sender ($A$) and a receiver ($B$).
Suppose they pre-share a maximally entangled state $\ket{\Phi_+}_{SS'}$ at the beginning ($A$ and $B$ carry the system $S$ and $S'$, respectively). 
$A$ aims to use the pair of instruments $\mathbfcal{E}$ (acting on $S$) to faithfully send one bit of classical information (encoded in $x=0,1$) to $B$.
In each round, $A$ selects $x$ with probability $\CY{p_x>0}$ and applies the $x$-th instrument from $\mathbfcal{E}$ on the $S$ part of $\ket{\Phi^+}_{SS'}$.
This generates a classical outcome $a$ and a post-instrument quantum state.
$A$ then sends {\em both} to $B$, and $B$ obtains the bipartite state 
$\Choi_{a|x}^{\CY{(\mathbfcal{E})}}/p(a|x)$ with probability $p(a,x)=p_xp(a|x)=\CY{p_x}\tr\big({J}_{a|x}^{\CY{(\mathbfcal{E})}}\big)$ [recall that $\Choi_{a|x}^{\CY{(\mathbfcal{E})}}\coloneqq(\mathcal{E}_{a|x}\otimes\mathcal{I}_{S'})(\proj{\Phi^+}_{SS'})$].

$B$ aims to find a bipartite POVM to correctly recover the index $x$, and $B$ will prefer to simply report ``I don't know'' if the outcome is ambiguous.
For this, conditioned on $a$, $B$ applies a three-outcome POVM $\{Q_0^{(a)},Q_1^{(a)},Q_\emptyset^{(a)}\}$.
When $B$ obtains the outcome $y=0,1$, then $B$ guesses the sent index is $y$.
If $B$ obtains the outcome $\emptyset$, then $B$ waives the round and report ``I don't know.''
Clearly, $B$ can make no mistake 
\CYthree{by always reporting} ``I don't know.''
To avoid trivial cases, we thus consider a fixed parameter $0\le\eta<1$ such that $B$ can only use POVMs with $Q_\emptyset\le\eta\id_{SS'}$. 
Such a POVM has at most $\eta$ probability to produce the outcome $\emptyset$ for {\em all possible} inputs.
$\eta$ thus describes how inconclusive the task is~\cite{Hsieh-IP}.

Now, as detailed in \CY{Appendix F}, after averaging over all $a$'s, the overall 
probability for $B$ to output a wrong index is
\begin{align}\label{Eq:ave error probability_main}
P_{\rm error}^{(\eta)}(\mathbfcal{E})
=
\sum_a\left[\CY{p_0}{\rm tr}\left(Q_1^{(a)}\Choi_{a|0}^{\CY{(\mathbfcal{E})}}\right)+\CY{p_1}{\rm tr}\left(Q_0^{(a)}\Choi_{a|1}^{\CY{(\mathbfcal{E})}}\right)\right].
\end{align}
{\em Unambiguous} communication is achieved if and only if $P_{\rm error}^{(\eta)}(\mathbfcal{E})=0$ for some $0\le\eta<1$,
meaning that $B$ can correctly extract the sent index $x$ with non-vanishing probability.

Now, note that $A$ and $B$ have the freedom to label an instrument's outcomes without changing the physical meaning. 
Mathematically, this is captured by a new instrument pair $\CY{{\bm\pi}(\mathbfcal{E})}\coloneqq\{\mathcal{E}_{\pi_x(a)|x}\}_{a,x}$
with two permutations ${\bm\pi}=\{\pi_0,\pi_1\}$ (one for each $x$ value) used to relabel $\mathbfcal{E}$'s outcomes.
Then, as detailed in \CY{Supplemental Material VI}, we show that
\begin{result}\label{Result:operational meaning} 
$\mathbfcal{E}$ is complementary if and only if there is a parameter $0\le\eta<1$ such that $P_{\rm error}^{(\eta)}\big(\CY{{\bm\pi}(\mathbfcal{E})}\big)=0$ for every ${\bm\pi}$.
\end{result}
We thus obtain a clear operational meaning: $\mathbfcal{E}$ is complementary if and only if no matter how we rename
its classical outcomes, it can always achieve unambiguous classical communication.
This task can tell apart incompatibility and complementarity, and,
once again, telling us
complementarity is strictly stronger than incompatibility.

\section{Discussions}
We identify a novel type of uncertainty principle's signature, termed complementary instruments, which is genuinely stronger than incompatible instruments (Result~\ref{Result:relation}). 
Specifically, we report a series of general noise-robust no-go results: Complementary instruments cannot be simulated by joint measurability assisted by quantum post-processing (classical-to-quantum signalling; Results~\ref{Result:no-go},~\ref{Result:noise robustness}) or unital quantum pre-processing (pre-measurement unital channels; Results~\ref{Result:no-go2}).
Notably, other definitions of complementary instruments have been considered~\cite{Rolino2025arXiv}, which are different from ours.
It is then interesting to relate our results to other types of complementarity, which we leave for future studies.

We also show that whether two instruments are complementary can be completely determined by SDP (Result~\ref{Result:Characterising complementarity}), making it numerically feasible.
Interestingly, solving its dual SDP implies that complementarity is equivalent to the ability to unambiguously send classical information (Result~\ref{Result:operational meaning}).
Combining with the connection between classical communication and energy transfer~\cite{Hsieh2025PRL,Hsieh2025PRA} as well as thermalisation~\cite{Hsieh2021PRXQ,Hsieh2020}, \CYthree{are complementary instruments the necessary resources for} any thermodynamic advantages? 
Further exploration is left for follow-up works.

\section{Acknowledgements}
We thank Yeong-Cherng Liang and Armin Tavakoli for fruitful discussions and comments on the physical interpretation of quantum instruments and their incompatibility.
C.-Y.~H.~acknowledges support from the Leverhulme Trust Early Career Fellowship (``Quantum complementarity: a novel resource for quantum science and technologies'' with Grant No.~ECF-2024-310). 
M.~S. thanks
the National Research Foundation, Singapore through
the National Quantum Office, hosted in A*STAR, under its Centre for Quantum Technologies Funding Initiative (S24Q2d0009); and the Ministry of Education, Singapore, under the Tier 2 grant “Bayesian approach to
irreversibility” (Grant No.~MOE-T2EP50123-0002).
S.-L.~C.~acknowledges the support of the National Science and Technology Council, Taiwan, with Grant No.~NSTC 114-2628-M-005-001-; and National Center for Theoretical Sciences, Taiwan, with Grant No.~NSTC 115-2124-M-002-014-.


\section{Appendix}

\subsection{Appendix A: Incompatibility reveals uncertainty principle}
To see better why incompatibility of instruments is related to the usual notion of uncertainty principle, consider two non-degenerate physical observables $\hat{\mathcal{O}}^{(0)}$ and $\hat{\mathcal{O}}^{(1)}$ with eigenbases $\{\ket{\psi_a^{(x)}}\}_a$ ($x=0,1$), where $a$ labels measurement outcomes.
Then, the process of measuring $\hat{\mathcal{O}}^{(x)}$ is described by the \CYthree{rank-one projective} instrument 
\begin{align}\label{Eq: physical observable induced instruments}
\mathcal{Q}_{a|x}(\cdot)\coloneqq\proj{\CYthree{\phi}_a^{(x)}}(\cdot)\proj{\CYthree{\phi}_a^{(x)}}\quad\forall\,a,x.
\end{align}
Namely, for an initial state $\rho$, we obtain \CYthree{the classical} outcome $a$ \CYthree{and the} post-measurement state $\ket{\CYthree{\phi}_a^{(x)}}$ with the probability $\tr\left[\mathcal{Q}_{a|x}(\rho)\right] = \bra{\psi_a^{(x)}}\rho\ket{\CYthree{\phi}_a^{(x)}}$.
Now, we have the \CYthree{following:}
\begin{fact}\label{fact}
$\{\mathcal{Q}_{a|x}\}_{a,x}\notin{\bf JM}$ implies $\left[\hat{\mathcal{O}}^{(0)},\hat{\mathcal{O}}^{(1)}\right]\neq0$, revealing uncertainty principle.
\end{fact}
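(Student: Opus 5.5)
We prove the contrapositive: assuming $[\hat{\mathcal{O}}^{(0)},\hat{\mathcal{O}}^{(1)}]=0$, we show $\{\mathcal{Q}_{a|x}\}_{a,x}\in{\bf JM}$ by writing down an explicit single instrument and classical post-processing as in Eq.~\eqref{Eq:JM}. (We read the instrument in Eq.~\eqref{Eq: physical observable induced instruments} as the Lüders instrument built from the eigenvectors $\ket{\psi_a^{(x)}}$, since the probability formula shows $\phi$ and $\psi$ denote the same vectors.)

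The first step is to exploit non-degeneracy. Two commuting non-degenerate Hermitian operators share an orthonormal eigenbasis. Each eigenspace of $\hat{\mathcal{O}}^{(0)}$ is one-dimensional and is invariant under $\hat{\mathcal{O}}^{(1)}$, so each $\ket{\psi_a^{(0)}}$ is also an eigenvector of $\hat{\mathcal{O}}^{(1)}$. Since $\hat{\mathcal{O}}^{(1)}$ is also non-degenerate, each $\ket{\psi_a^{(0)}}$ equals, up to a global phase, a unique $\ket{\psi_{\sigma(a)}^{(1)}}$, where $\sigma$ is a permutation of the outcome labels.

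The global phase cancels in $\proj{\psi}$, so $\mathcal{Q}_{\sigma(a)|1}=\mathcal{Q}_{a|0}$ for all $a$. This step is the only one needing care, namely checking that phases and labels do not obstruct the identification.

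Next we construct the joint instrument. Take $\mathcal{G}_\lambda\coloneqq\mathcal{Q}_{\lambda|0}$, which is an instrument because $\{\ket{\psi_\lambda^{(0)}}\}_\lambda$ is an orthonormal basis. Set
\begin{align*}
P(a|0,\lambda)=\delta_{a,\lambda},\qquad P(a|1,\lambda)=\delta_{a,\sigma(\lambda)}.
\end{align*}
Then
\begin{align*}
\sum_\lambda P(a|0,\lambda)\mathcal{G}_\lambda=\mathcal{Q}_{a|0},\qquad \sum_\lambda P(a|1,\lambda)\mathcal{G}_\lambda=\mathcal{Q}_{\sigma^{-1}(a)|0}=\mathcal{Q}_{a|1}.
\end{align*}
This is exactly Eq.~\eqref{Eq:JM}, so the pair is in ${\bf JM}$.

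Taking the contrapositive, incompatibility forces the commutator to be nonzero, and by the Robertson relation a nonzero commutator yields a nontrivial uncertainty bound. No step is a real obstacle; the argument is essentially linear algebra.
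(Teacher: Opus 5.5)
Your proof is correct and follows essentially the same argument as the paper. You take the contrapositive, use non-degeneracy to show the two eigenbases coincide up to a relabelling permutation (your $\sigma$ is the inverse of the paper's $\pi$), and then write $\mathcal{G}_\lambda=\mathcal{Q}_{\lambda|0}$ with deterministic relabelling post-processing, which is exactly the paper's decomposition. Your explicit handling of global phases and of why one-dimensional eigenspaces are invariant under $\hat{\mathcal{O}}^{(1)}$ only fills in details the paper leaves implicit.
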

Hence, incompatibility can be \CYthree{viewed as} a signature of uncertainty principle (see also Ref.~\cite{Guhne2023RMP}). 
Even more, when measurement processes are noisy (which cannot be captured by physical observables), incompatible instruments can still reveal uncertainty principle's effect via {\em non-joint-measurability.}

{\em Proof of Fact~\ref{fact}.}
If $\left[\hat{\mathcal{O}}^{(0)},\hat{\mathcal{O}}^{(1)}\right]=0$,
then $\hat{\mathcal{O}}^{(0)}$ and $\hat{\mathcal{O}}^{(1)}$ can be simultaneously measured via simultaneous diagonalisation.
\CYthree{Since they are non-degenerate,}
we have \mbox{$\{\ket{\CYthree{\phi}_a^{(0)}}\}_a=\{\ket{\CYthree{\phi}_{\CYthree{a'}}^{(1)}}\}_{\CYthree{a'}}$}, \CYthree{which are the same basis but possibly different orders of labels}.
\CYthree{Namely,}
there is a permutation $\pi$ achieving $\ket{\CYthree{\phi}_{\pi(a)}^{(0)}} = \ket{\CYthree{\phi}_a^{(1)}}$ $\forall\,a$.
This further implies that 
\begin{align}
\mathcal{Q}_{\pi(a)|0}=\mathcal{Q}_{a|1}\quad\forall\,a,
\end{align}
which form a compatible pair of instruments.
To see this, let us choose a binary variable $\lambda=0,1$ with the instrument \mbox{$\mathcal{G}_\lambda=\mathcal{Q}_{\lambda|0}$} and \CYthree{the} conditional probability 
\begin{align}
\CYthree{P(a|x,\lambda)=\delta_{a,\lambda}\delta_{x,0}+\delta_{\pi(a),\lambda}\delta_{x,1}.}
\end{align}
Then, one can check that 
\begin{align}
\mathcal{Q}_{a|x}=\sum_{\lambda}P(a|,x,\lambda)\mathcal{G}_\lambda\quad\forall\,a,x.
\end{align}
This thus means $\{\mathcal{Q}_{a|x}\}_{a,x}\in{\bf JM}$.
Consequently, if $\{\mathcal{Q}_{a|x}\}_{a,x}\notin{\bf JM}$, then we must have $\left[\hat{\mathcal{O}}^{(0)},\hat{\mathcal{O}}^{(1)}\right]\neq0$.
\hfill$\square$

\subsection{Appendix B: Complementary POVMs as special cases of complementary instruments}
\CYtwo{When $\mathbfcal{E}$'s output space} has dimension one (denoted by $d_{\rm out}=1$), \CYtwo{Result~\ref{Result:Characterising complementarity}} implies that complementary instruments are equivalent to complementary POVMs.
\CYtwo{To see this,} recall that a pair of POVMs $\{E_{a|x}\}_{a,x}$ (namely, for each $x=0,1$, the set $\{E_{a|x}\}_a$ is a POVM) is said to be {\em complementary}~\cite{Guhne2023RMP,Hsieh-IP} if the following statement is true for {\em every} pair of outcome index pair $(a,b)$ and positive semi-definite operator $P\ge0$:
\begin{align}\label{Eq:Complementary POVMs}
E_{a|0}\ge P\;\;\&\;\;E_{b|1}\ge P\;\;\Longrightarrow\;\;P=0.
\end{align}
\CYtwo{
The operational meaning is similar to the one discussed in Fig.~\ref{Fig:overlap}, but now it is only for classical outcomes.
Namely, the left-hand side of Eq.~\eqref{Eq:Complementary POVMs} implies that 
\begin{align}
{\rm tr}(E_{a|0}\rho)\ge{\rm tr}(P\rho)\quad\&\quad{\rm tr}(E_{b|1}\rho)\ge{\rm tr}(P\rho)\quad\forall\,\rho,
\end{align}
where $P$ can be viewed as an element of a two-outcome POVM $\{\id-P,P\}$.
Hence, any non-vanishing $P\neq0$ achieving this implies that non-trivial ``common parts'' can be extracted by implementing a single POVM $\{\id-P,P\}$.
Eq.~\eqref{Eq:Complementary POVMs} thus means that two POVMs are complementary if no such common parts can exist.
}

Now, for
$\mathbfcal{E}=\{\mathcal{E}_{a|x}\}_{a,x}$ with $d_{\rm out}=1$, we
can express it as
\begin{align}\label{Eq:Instruments with trivial output space}
\mathcal{E}_{a|x}={\rm tr}[E_{a|x}(\cdot)]\quad\forall\,a,x,
\end{align}
where $\{E_{a|x}\}_{a,x}$ is a set of POVMs acting on \CYthree{$\mathbfcal{E}$'s input system $S$} (this is actually the inverse Choi map~\cite{Choi1975,Jamiolkowski1972} of $\mathcal{E}_{a|x}$).
Then, we have:
\begin{fact}\label{fact: complementary POVMs}
When $d_{\rm out}=1$, $\mathbfcal{E}$ is complementary if and only if $\{E_{a|x}\}_{a,x}$ is complementary, and they are related via Eq.~\eqref{Eq:Instruments with trivial output space}.
\end{fact}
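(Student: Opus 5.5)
The plan is to reduce Fact~\ref{fact: complementary POVMs} to Result~\ref{Result:Characterising complementarity} by computing the Choi operators explicitly when $d_{\rm out}=1$. With the instruments written as in Eq.~\eqref{Eq:Instruments with trivial output space}, I would first evaluate
\begin{align}
\Choi^{(\mathbfcal{E})}_{a|x}=({\rm tr}_S\otimes\mathcal{I}_{S'})\left[(E_{a|x}\otimes\id_{S'})\proj{\Phi^+}_{SS'}\right]=\frac{1}{d}E_{a|x}^{T},
\end{align}
where $T$ is the transpose in the computational basis, now regarded as an operator on $S'$. This is the standard identity $(X\otimes\id)\ket{\Phi^+}=(\id\otimes X^T)\ket{\Phi^+}$ combined with ${\rm tr}_S\proj{\Phi^+}=\id/d$.

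Next, I would substitute this into Eq.~\eqref{Eq:SDP}. Transposition is a linear bijection that preserves positivity, so $P\mapsto P':=d\,P^T$ is a bijection between feasible sets. It maps $\{P\ge0,\ \Choi_{a|0}\ge P,\ \Choi_{b|1}\ge P\}$ onto $\{P'\ge0,\ E_{a|0}\ge P',\ E_{b|1}\ge P'\}$, and ${\rm tr}(P)={\rm tr}(P')/d$. Hence $C_{(a,b)}(\mathbfcal{E})=0$ if and only if every $P'\ge0$ with $E_{a|0}\ge P'$ and $E_{b|1}\ge P'$ has ${\rm tr}(P')=0$. For positive semidefinite $P'$, ${\rm tr}(P')=0$ is equivalent to $P'=0$, so this is exactly the implication in Eq.~\eqref{Eq:Complementary POVMs} for the pair $(a,b)$.

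Finally, ranging over all outcome pairs $(a,b)$ gives that $\max_{a,b}C_{(a,b)}(\mathbfcal{E})=0$ if and only if $\{E_{a|x}\}_{a,x}$ is complementary. Result~\ref{Result:Characterising complementarity} then concludes the proof. Two points need care. First, the paper's index range $\{0,\dots,d-1\}$ in Result~\ref{Result:Characterising complementarity} should be read as all outcome labels; if there are fewer outcomes, missing ones correspond to zero operators and contribute nothing. Second, the only genuinely delicate step is the transpose bookkeeping in the Choi computation, namely on which factor the operators live and the $1/d$ normalisation. Both are harmless because positivity and the zero/nonzero status of the trace are invariant under these maps.
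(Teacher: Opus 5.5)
Your proposal is correct and matches the paper's proof. The paper likewise computes $(\mathcal{E}_{a|x}\otimes\mathcal{I}_{S'})(\proj{\Phi^+}_{SS'})=E_{a|x}^t/d_{\rm in}$ via the transpose identity on $\ket{\Phi^+}$, then transfers the characterisation of Result~\ref{Result:Characterising complementarity} through transposition and rescaling; the only cosmetic difference is that you phrase the criterion as $C_{(a,b)}(\mathbfcal{E})=0$ together with ${\rm tr}(P')=0\Leftrightarrow P'=0$, whereas the paper uses the equivalent implication form from Lemma~\ref{lemma: Choi cahracterisation common Q part} directly.
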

\begin{proof}
Using $(\id_S\otimes P^t_{S'})\ket{\Phi^+}_{SS'} = (P_S\otimes \id_{S'})\ket{\Phi^+}_{SS'}$ for every normal operator $P$, where $(\cdot)^t$ is the transpose operation with respect to the given computational basis, we have
\begin{align}
(\mathcal{E}_{a|x}\otimes\mathcal{I}_{\CYtwo{S'}})(\proj{\Phi^+}_{\CYtwo{SS'}})=E_{a|x}^t/d_{\rm in}\quad\forall\,a,x,
\end{align}
where $d_{\rm in}$ is the dimension of \CYtwo{$S$ and $S'$.}
Using \CYtwo{Result~\ref{Result:Characterising complementarity}} [see also Lemma~\ref{lemma: Choi cahracterisation common Q part} and Eq.~\eqref{Eq:Complementary instruments defining condition} in \CY{Supplemental Material I}], $\mathbfcal{E}$ is complementary if and only if the following statement is true for every outcome index pair $(a,b)$ and positive semi-definite operator $P\ge0$:
\begin{align}
E_{a|0}^t/d_{\rm in}\ge P\;\;\&\;\;E_{b|1}^t/d_{\rm in}\ge P\;\;\Longrightarrow\;\;P=0.
\end{align}
This is equivalent to demanding Eq.~\eqref{Eq:Complementary POVMs} for every outcome index pair $(a,b)$ and positive semi-definite operator $P\ge0$.
Hence, $\mathbfcal{E}$ is complementary if and only if $\{E_{a|x}\}_{a,x}$ [related via Eq.~\eqref{Eq:Instruments with trivial output space}] is complementary.
\end{proof}
Fact~\ref{fact: complementary POVMs} implies that complementary instruments serve as a more general notion that contains all complementary POVMs as special cases, and the latter are equivalent to complementary instruments with a trivial quantum output space.

\subsection{Appendix C: Complementarity reveals uncertainty principle}
Result~\ref{Result:relation} shows how complementary instruments can reveal uncertainty principle.
For non-degenerate physical observables $\hat{\mathcal{O}}^{(0)}$ and $\hat{\mathcal{O}}^{(1)}$ with eigenbases $\{\ket{\CYthree{\phi}_a^{(x)}}\}_a$ (with $x=0,1$), as we have discussed \CY{in Appendix A}, the incompatibility of the induced instruments $\mathcal{Q}_{a|x}(\cdot)=\proj{\CYthree{\phi}_a^{(x)}}(\cdot)\proj{\CYthree{\phi}_a^{(x)}}$ as given in Eq.~\eqref{Eq: physical observable induced instruments} implies $[\hat{\mathcal{O}}^{(0)},\hat{\mathcal{O}}^{(1)}]\neq0$.
Then Result~\ref{Result:relation} implies that the complementarity of $\{\mathcal{Q}_{a|x}\}_{a,x}$ implies non-commutativity of $\hat{\mathcal{O}}^{(0)}$ and $\hat{\mathcal{O}}^{(1)}$ (see also Ref.~\cite{Starke2026review}), \CYtwo{thereby revealing uncertainty principle.}

\subsection{Appendix D: Proof of Eq.~\eqref{Eq:noisy Pauli JM bound}}
For $\mathbfcal{P}^{(\epsilon)}$, let us consider the set of sub-normalised states 
\begin{align}
\mathbfcal{P}^{(\epsilon)}(\id/2)\coloneqq\{\mathcal{P}_{a|x}^{(\epsilon)}(\id/2)\}_{a,x}.
\end{align}
\CYthree{This set is said to admit a} {\em local hidden-state} (LHS) model~\cite{Wiseman2007PRL,Jones2007PRA,Uola2020RMP,Cavalcanti2016,Xiang2022PRXQuantum}, denoted as $\mathbfcal{P}^{(\epsilon)}(\id/2)\in{\bf LHS}$, if it can be written in the following form:
\begin{align}\label{Eq:LHS}
\mathcal{P}_{a|x}^{(\epsilon)}\left({\id/2}\right)\stackrel{\rm LHS}{=}\sum_\lambda P(a|x,\lambda)q(\lambda)\rho_\lambda\quad\forall\,a,x,
\end{align}
where $\lambda$ is the so-called hidden variable, $\{P(a|x,\lambda)\}_{a,x,\lambda}$ as well as $\{q(\lambda)\}_\lambda$ are (conditional) probability distributions, and \CYthree{$\rho_\lambda$ are states}. 
Then, $\mathbfcal{P}^{(\epsilon)}(\id/2)$ is called {\em steerable}~\cite{Wiseman2007PRL,Jones2007PRA,Uola2020RMP,Cavalcanti2016,Xiang2022PRXQuantum} if it cannot be described by any LHS model, namely, $\mathbfcal{P}^{(\epsilon)}(\id/2)\notin{\bf LHS}$.
Quantum steering is an essential resource for violating Bell inequalities and one-sided device-independent quantum information tasks (see, e.g., Refs.~\cite{Uola2020RMP,Cavalcanti2016,Xiang2022PRXQuantum} for a brief review), which are beyond the scope of this work.
Here, we only use it as a tool
to prove Eq.~\eqref{Eq:noisy Pauli JM bound}.

From the definition of {\bf LHS}, one can directly see that it is closely related to joint measurability defined in Eq.~\eqref{Eq:JM}.
In fact, when $\mathbfcal{P}^{(\epsilon)}\in{\bf JM}$, we can write \CYtwo{(see Ref.~\cite{Hsieh2024PRL} and the discussion above Eq.~(4) therein)}
\begin{align}
\CYtwo{\mathcal{P}_{a|x}^{(\epsilon)}(\id/2)=\sum_\lambda P(a|x,\lambda)\mathcal{G}_\lambda(\id/2)\quad\forall\,a,x,} 
\end{align}
which becomes a LHS model by setting the probability \mbox{$q(\lambda)={\rm tr}[\mathcal{G}_\lambda(\id/2)]$} and the state $\rho_\lambda=\mathcal{G}_\lambda(\id/2)/q(\lambda)$.
Namely, 
$\mathbfcal{P}^{(\epsilon)}\in{\bf JM}$ implies $\mathbfcal{P}^{(\epsilon)}(\id/2)\in{\bf LHS}$.
This thus implies \CYthree{that}
%
\begin{align}\label{fact: steering incomp}
\text{If $\mathbfcal{P}^{(\epsilon)}(\id/2)$ is steerable, then $\mathbfcal{P}^{(\epsilon)}$ is incompatible.}
\end{align}
Hence, the problem \CYthree{of certifying $\mathbfcal{P}^{(\epsilon)}(\id/2)$'s incompatibility} becomes a steering certification \CYthree{problem.}
To certify steering, we follow the approach of Ref.~\cite{Hsieh2024PRL} [see its Supplemental Material IX Eqs.~(120-125)] and consider the following steering inequality shown in Ref.~\cite{Pusey2013} (which is originated from \CYthree{Eq.~(63) in} Ref.~\cite{Cavalcanti2009}), which, in our current context, reads:
\begin{align}\label{Eq:steering inequality}
\sum_{a,x}{\rm tr}\left[F_{a|x}\mathcal{P}_{a|x}^{(\epsilon)}(\id/2)\right]\stackrel{\rm LHS}{\le}\sqrt{2}\quad{\rm if}\;\mathbfcal{P}^{(\epsilon)}(\id/2)\in{\bf LHS},
\end{align}
where 
\begin{align}
&F_{0|0}=-F_{1|0}=\proj{0}-\proj{1};\\
&F_{0|1}=-F_{1|1}=\proj{+}-\proj{-},
\end{align} 
namely, they are Pauli \CYthree{Z and X} with a plus/minus sign.
Consequently, a violation of the inequality in Eq.~\eqref{Eq:steering inequality} implies steering of $\mathbfcal{P}^{(\epsilon)}(\id/2)$, and hence the incompatibility of $\mathbfcal{P}^{(\epsilon)}$ due to Eq.~\eqref{fact: steering incomp}.
Now, as one can check, for the noiseless Pauli instruments $\mathbfcal{E}^{\rm Pauli}$ [Eqs.~\eqref{Eq:Pauli Z} and~\eqref{Eq:Pauli X}], we have
\begin{align}
\sum_{a,x}{\rm tr}\left[F_{a|x}\mathcal{E}_{a|x}^{\rm Pauli}(\id/2)\right]=2.
\end{align}
Also, $\sum_{a,x}{\rm tr}\left[F_{a|x}(\id/4)\right]=0$.
Using Eq.~\eqref{Eq:noisy Pauli}, we obtain
\begin{align}
\sum_{a,x}{\rm tr}\left[F_{a|x}\mathcal{P}_{a|x}^{(\epsilon)}(\id/2)\right]=2(1-\epsilon),
\end{align}
and the inequality in Eq.~\eqref{Eq:steering inequality} is violated if $1-\epsilon>1/\sqrt{2}$, which leads to the desired bound in Eq.~\eqref{Eq:noisy Pauli JM bound}.
\hfill$\square$

\subsection{Appendix E: Proof of Eq.~\eqref{Eq:Q post-processing simulation example}}
For any $\mathbfcal{L}\in{\bf JM}$ and channels $\mathcal{M}_x$'s, by definition the set $\{\mathcal{M}_x\circ\mathcal{L}_{a|x}\}_{a,x}$ is a well-defined pair of instruments, and is of the form Eq.~\eqref{Eq:post-processing}; namely, $\{\mathcal{M}_x\circ\mathcal{L}_{a|x}\}_{a,x}\in{\bf JM}_{\rm post}$.
Now, note that, generally, 
\begin{align}\label{Eq: simple but important fact}
\text{
$\mathbfcal{E}\in{\bf JM}$ must implies $\sum_a\mathcal{E}_{a|0}=\sum_a\mathcal{E}_{a|1}$,
}
\end{align}
which can be directly seen from Eq.~\eqref{Eq:JM}.
Consequently, every $\mathbfcal{E}$ with $\sum_a\mathcal{E}_{a|0}\neq\sum_a\mathcal{E}_{a|1}$ must be incompatible.
By setting $\mathcal{E}_{a|x}=\mathcal{M}_x\circ\mathcal{L}_{a|x}$, the assumed condition \mbox{$(\mathcal{M}_0-\mathcal{M}_1)\circ\sum_\lambda\mathcal{G}_\lambda\neq0$} together with the fact \mbox{$\sum_a\mathcal{L}_{a|x}=\sum_\lambda\mathcal{G}_\lambda$ $\forall\,x$}
implies that
\begin{align}
\mathcal{M}_0\circ\sum_a\mathcal{L}_{a|0}&=\mathcal{M}_0\circ\sum_\lambda\mathcal{G}_\lambda\nonumber\\
&\neq\mathcal{M}_1\circ\sum_\lambda\mathcal{G}_\lambda=\mathcal{M}_1\circ\sum_a\mathcal{L}_{a|1}.
\end{align} 
Hence, the pair of instruments $\{\mathcal{M}_x\circ\mathcal{L}_{a|x}\}_{a,x}$ is incompatible, meaning that it is outside ${\bf JM}$.
\hfill$\square$

\subsection{Appendix F: Calculation for Eq.~\eqref{Eq:ave error probability_main}}
\CYthree{We note that,} {\em conditioned on $a$,} what $B$ needs to do is distinguishing two states $\{{J}_{a|x}^{\CY{(\mathbfcal{E})}}/p(a|x)\}_{x=0,1}$ subject to the probability 
\begin{align}
p(x|a)\eqqcolon\frac{p(a,x)}{\CY{p'_a}}=\frac{p(a|x)\CY{p_x}}{\CY{p'_a}},
\end{align}
where \CYthree{$p'_a\coloneqq p(a,0)+p(a,1)$} is the probability of obtaining $a$.
For this $a$ index, the probability of reporting a wrong answer (i.e., $y\neq x$) reads
\begin{align}
&P_{{\rm error}|a}^{(\eta)}\coloneqq p(0|a){\rm tr}\left[Q_1^{(a)}\frac{\Choi_{a|0}^{\CY{(\mathbfcal{E})}}}{p(a|0)}\right]+p(1|a){\rm tr}\left[Q_0^{(a)}\frac{\Choi_{a|1}^{\CY{(\mathbfcal{E})}}}{p(a|1)}\right]\nonumber\\
&=\frac{\CY{p_0}{\rm tr}\left(Q_1^{(a)}\Choi_{a|0}^{\CY{(\mathbfcal{E})}}\right)+\CY{p_1}{\rm tr}\left(Q_0^{(a)}\Choi_{a|1}^{\CY{(\mathbfcal{E})}}\right)}{\CY{p'_a}}.
\end{align}
Averaging over all $a$'s, the overall error probability reads 
\begin{align}\label{Eq:ave error probability}
P_{\rm error}^{(\eta)}(\mathbfcal{E}) &\coloneqq \sum_a\CY{p'_a}P_{{\rm error}|a}^{(\eta)}\nonumber\\
&=
\sum_a\left[\CY{p_0}{\rm tr}\left(Q_1^{(a)}\Choi_{a|0}^{\CY{(\mathbfcal{E})}}\right)+\CY{p_1}{\rm tr}\left(Q_0^{(a)}\Choi_{a|1}^{\CY{(\mathbfcal{E})}}\right)\right],
\end{align}
which gives Eq.~\eqref{Eq:ave error probability_main}.

\clearpage
\section{Supplemental Material}

\CY{
\subsection{Supplemental Material I: Proof of Result~\ref{Result:Characterising complementarity}}
}

In this section, we provide a detailed proof of 
\CYtwo{Result~\ref{Result:Characterising complementarity} in the main text.}
%
Just like in the main text, for a $d$-dimensional system $S$, we use $S'$ to denote a system identical to $S$. 
In the bipartite system $SS'$, 
\begin{align}\label{Eq: max ent state}
\ket{\Phi^+}_{SS'}\coloneqq\frac{1}{\sqrt{d}}\sum_{n=0}^{d-1}\ket{nn}_{SS'}
\end{align} 
is a maximally entangled state, where $\{\ket{n}\}_n$ is the given computational basis in $S$ and $S'$.
For simplicity, we write \mbox{$\Phi_{SS'}^+\coloneqq\proj{\Phi^+}_{SS'}$}.
\CYtwo{To better specify input and output spaces, since now, a map is said to be an {\em $\Sin\to\Sout$ filter/channel} if it is a filter/channel with input space $\Sin$ and output space $\Sout$, where $\Sin$ and $\Sout$ are quantum systems with finite dimensions $\dSin$ and $\dSout$, respectively.
Likewise, $\mathbfcal{E}$ is called a set of {\em $\Sin\to\Sout$ instruments} if all its filters $\mathcal{E}_{a|x}$ are $\Sin\to\Sout$ filters.  
Finally, $\Sin'$ ($\Sout'$) denotes a system identical to $\Sin$ ($\Sout$).}

\subsubsection{Characterisation Lemmas}
Our goal now is to fully characterise complementary instruments with arbitrary finite input and output dimensions $\dSin$ and $\dSout$.
To this end, we need to develop the following lemma.
Below, $A$ and $B$ denote generic quantum systems with finite dimensions $d_A$ and $d_B$, respectively.

\begin{lemma}\label{lemma: P_{SS'} filter}
Let $P_{AB}\ge0$ with $P_{AB}\neq0$ be an operator acting on the bipartite system $AB$.
Then there exist a number $\alpha>0$ and a \CY{non-vanishing $B\to A$ filter $\mathcal{L}\neq0$} achieving
\begin{align}
P_{AB'} = \alpha(\mathcal{L}\otimes\mathcal{I}_{B})(\Phi^+_{BB'}),
\end{align}
where $B'$ is a system identical to $B$.
\end{lemma}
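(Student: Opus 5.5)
The idea is to use the Choi--Jamio{\l}kowski isomorphism. Every positive operator on $AB'$ is, up to a constant, the Choi operator of a completely-positive $B\to A$ map. We then rescale that map so it becomes trace-non-increasing.

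\emph{Step 1.} Define the linear map $\mathcal{K}:B\to A$ by
\begin{align}
\mathcal{K}(X_B)\coloneqq d_B\,{\rm tr}_{B'}\!\left[P_{AB'}\left(\id_A\otimes X^t_{B'}\right)\right],
\end{align}
where the transpose is taken in the computational basis used to define $\ket{\Phi^+}_{BB'}$. A direct computation with $\Phi^+_{BB'}=(1/d_B)\sum_{n,m}\ket{n}\bra{m}_B\otimes\ket{n}\bra{m}_{B'}$ gives
\begin{align}
(\mathcal{K}\otimes\mathcal{I}_B)(\Phi^+_{BB'})=\frac{1}{d_B}\sum_{n,m}\mathcal{K}(\ket{n}\bra{m})\otimes\ket{n}\bra{m}=P_{AB'}.
\end{align}
This is the standard inverse Choi construction, with the ordering of the systems matched to the statement. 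Since $P_{AB'}\ge0$, Choi's theorem says $\mathcal{K}$ is completely positive. Since $P_{AB'}\neq0$, we have $\mathcal{K}\neq0$.

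\emph{Step 2.} Make $\mathcal{K}$ trace-non-increasing by rescaling. The map $\mathcal{K}$ is trace-non-increasing if and only if its dual satisfies $\mathcal{K}^\dagger(\id_A)\le\id_B$. Computing the dual gives $\mathcal{K}^\dagger(\id_A)=d_B\,({\rm tr}_A P_{AB'})^t$, which is a positive operator. It is nonzero, because ${\rm tr}\,P_{AB'}>0$ for a nonzero positive operator. Set
\begin{align}
\alpha\coloneqq\norm{\mathcal{K}^\dagger(\id_A)}_\infty=d_B\norm{{\rm tr}_A P_{AB'}}_\infty>0,
\end{align}
where the transpose does not change the operator norm. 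Then define $\mathcal{L}\coloneqq\mathcal{K}/\alpha$. This $\mathcal{L}$ is completely positive, nonzero, and satisfies $\mathcal{L}^\dagger(\id_A)\le\id_B$, so it is a non-vanishing $B\to A$ filter. Moreover $P_{AB'}=\alpha(\mathcal{L}\otimes\mathcal{I}_B)(\Phi^+_{BB'})$, as required.

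No step is a genuine obstacle. The only care needed is bookkeeping: placing the transpose correctly and matching the system labels ($B$ as the input of $\mathcal{L}$, $B'$ as the reference copy) so that the identity in Step 1 holds exactly. An alternative that avoids writing $\mathcal{K}$ explicitly is to take a Kraus-free argument via a spectral decomposition of $P_{AB'}$. Each eigenvector $\ket{v_i}_{AB'}$ can be written as $(K_i\otimes\id_{B'})\ket{\Phi^+}_{BB'}$ with $K_i=\sqrt{d_B}\sum_n(\id_A\otimes\bra{n}_{B'})\ket{v_i}\bra{n}_B$. This gives Kraus operators directly, and the same rescaling of $\sum_iK_i^\dagger K_i$ then finishes the proof.
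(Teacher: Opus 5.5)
Your proof is correct, but it takes a different route from the paper. The paper never writes down the inverse Choi map. It first handles a single normalised pure state $\ket{\psi}_{AB'}$: via its Schmidt decomposition it writes $\ket{\psi}_{AB'}=[(\sqrt{d_B}\,QV^t)\otimes\id_{B'}]\ket{\Phi^+}_{BB'}$ with a single Kraus operator $QV^t$, which is trace-non-increasing because the Schmidt coefficients satisfy $q_n^2\le1$. It then spectrally decomposes the normalised state $P_{AB'}/{\rm tr}(P_{AB'})$, takes the convex mixture of the corresponding pure-state filters, and ends with $\alpha=d_B\,{\rm tr}(P_{AB'})$.

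You instead apply Choi's theorem to $\mathcal{K}$ in one step and enforce trace-non-increase by computing $\mathcal{K}^\dagger(\id_A)=d_B({\rm tr}_AP_{AB'})^t$. Because the Choi correspondence is injective, the two constructions give the same filter up to a scalar: the paper's filter is $\mathcal{K}/(d_B\,{\rm tr}P_{AB'})$, and yours is $\mathcal{K}/(d_B\norm{{\rm tr}_AP_{AB'}}_\infty)$. Since $\norm{{\rm tr}_AP_{AB'}}_\infty\le{\rm tr}(P_{AB'})$, your $\alpha$ is never larger than the paper's. It is in fact the smallest $\alpha$ for which the identity can hold with a filter.

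Your argument is shorter and gives this sharp constant, at the cost of invoking Choi's theorem and the dual-map criterion for trace-non-increase. The paper's argument is more elementary: trace-non-increase comes for free from normalisation of pure states and closure of filters under convex mixtures, with no dual map needed. Your closing alternative is essentially the paper's pure-state step, since your $K_i$ coincides with the paper's $\sqrt{d_B}\,QV^t$. There, though, you should weight by the eigenvalues, i.e.\ use Kraus operators $\sqrt{\lambda_i}K_i$, before rescaling.
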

\begin{proof}
First, recall that, for any pure state $\ket{\psi}_{AB'}$ in $AB'$, we can write down its Schmidt decomposition (see, e.g., Theorem 2.7 and Exercise 2.76 in Ref.~\cite{QIC-book}) as 
\begin{align}
\ket{\psi}_{AB'}=\sum_nq_n(U_A\otimes V_{B'})(\ket{n}_{A}\otimes\ket{n}_{B'}), 
\end{align}
where $\ket{n}_A$ and $\ket{n}_{B'}$ are from the given computational bases in $A$ and $B'$ used to define $\ket{\Phi^+}_{AA'}$ and $\ket{\Phi^+}_{BB'}$ via Eq.~\eqref{Eq: max ent state}. 
Also, $U_A$ and $V_{B'}$ are some single-party unitary operators acting on $A$ and $B'$, respectively. 
Finally, $q_n\ge0$ satisfying $\sum_nq_n^2=1$ are the so-called Schmidt coefficients~\cite{QIC-book}. 
Now, define the operator
\begin{align}
Q\coloneqq\sum_nq_nU_A\ket{n}_A\bra{n}_B,
\end{align}
which maps from $B$ to $A$, and we keep its system dependence implicit for clarity.
Then, one can rewrite $\ket{\psi}_{AB'}$ as
\begin{align}
\ket{\psi}_{AB'}&= (Q\otimes V_{B'})\sum_n\ket{n}_{B}\otimes\ket{n}_{B'}\nonumber\\
&=\left[\left(\sqrt{d_B}QV_B^t\right)\otimes\id_{B'}\right]\ket{\Phi^+}_{BB'}.
\end{align}
Here, we have used 
\mbox{$(\id_B\otimes V_{B'})\ket{\Phi^+}_{BB'} = (V^t_B\otimes \id_{B'})\ket{\Phi^+}_{BB'}$}, where $(\cdot)^t$ is the transpose operation with respect to the 
basis $\{\ket{n}_B\}_n$.
Now, define the following mapping from $B$ to $A$:
\begin{align}
\mathcal{F}(\cdot)\coloneqq QV^t(\cdot)\left(QV^t\right)^\dagger,
\end{align}
which directly implies that
\begin{align}\label{Eq:purestatefilter}
\proj{\psi}_{AB'} = d_B(\mathcal{F}\otimes\mathcal{I}_{B'})(\Phi^+_{BB'}).
\end{align}
Hence, by Choi-Jamio\l kowski isomorphism~\cite{Choi1975,Jamiolkowski1972}, 
$\mathcal{F}$ is a completely-positive linear map.
Moreover, this is a mapping with a single Kraus operator $QV^t$. 
Since
\begin{align}
\left(QV^t\right)^\dagger QV^t &= V^{t,\dagger} Q^\dagger QV^t\nonumber\\
&= \sum_nq_n^2V^{t,\dagger}\ket{n}_B\bra{n}_BV^t\le\id_B,
\end{align}
we have that
\begin{align}
{\rm tr}[\mathcal{F}(\rho)]={\rm tr}\left[\left(QV^t\right)^\dagger QV^t\rho\right]\le{\rm tr}(\rho)\quad\forall\,\rho:\text{state}.
\end{align}
This means $\mathcal{F}$ is also trace-non-increasing~\cite{QIC-book}; namely, it is a valid filter.
Consequently, 
for {\em every} pure state $\ket{\psi}_{AB'}$, there is a $B\to A$ filter $\mathcal{F}$ achieving Eq.~\eqref{Eq:purestatefilter}.

Now we can proceed to complete the proof of the lemma.
Since $P_{AB'}\neq0$ and $P_{AB'}\ge0$, we can write 
\begin{align}\label{Eq: P induced state}
\rho_{AB'}\coloneqq \frac{P_{AB'}}{{\rm tr}(P_{AB'})}, 
\end{align}
which is a normalised state in the bipartite system $AB'$, and hence has a spectral decomposition as~\cite{QIC-book}
\begin{align}
\rho_{AB'} = \sum_lp_l\proj{\psi_l}_{AB'}
\end{align}
for some bipartite pure states $\ket{\psi_l}_{AB'}$ and eigenvalues \mbox{$p_l\ge0$} satisfying $\sum_lp_l=1$.
Using Eq.~\eqref{Eq:purestatefilter}, for each $l$, there is a $B\to A$ filter $\mathcal{F}^{(l)}$ achieving 
\begin{align}
\proj{\psi_l}_{AB'} = d_B(\mathcal{F}^{(l)}\otimes\mathcal{I}_{B'})(\Phi^+_{BB'}).
\end{align}
Consequently, we have that
\begin{align}
\rho_{AB'} &= d_B\left[\left(\sum_lp_l\mathcal{F}^{(l)}\right)\otimes\mathcal{I}_{B'}\right](\Phi^+_{BB'})\nonumber\\
&\eqqcolon d_B(\mathcal{L}\otimes\mathcal{I}_{B'})(\Phi^+_{BB'}),
\end{align}
where
\begin{align}
\mathcal{L}\coloneqq\sum_lp_l\mathcal{F}^{(l)}
\end{align}
is again a $B\to A$ filter.
Finally, define
\begin{align}
\alpha\coloneqq d_B{\rm tr}(P_{BB'})>0.
\end{align}
Then, together with Eq.~\eqref{Eq: P induced state}, we have 
\begin{align}
P_{AB'} = d_B{\rm tr}(P_{AB'})\times\frac{P_{AB'}}{d_B{\rm tr}(P_{AB'})} = \alpha(\mathcal{L}\otimes\mathcal{I}_{B'})(\Phi^+_{BB'}),
\end{align}
which thus completes the proof.
\end{proof}

Using the above lemma, we can prove the following characterisation of instruments that \CY{have non-vanishing overlap}:

\begin{lemma}\label{lemma: Choi cahracterisation common Q part}
A pair of $\Sin\to\Sout$ instruments $\mathbfcal{E}$ \CY{has non-vanishing overlap} if and only if there exist an outcome index pair $(a,b)$ and an operator $P_{\Sout\Sin'}\ge0$ with $P_{\Sout\Sin'}\neq0$ such that
\begin{align}\label{Eq: lemma condition001}
&(\mathcal{E}_{a|0}\otimes\mathcal{I}_{\Sin'})(\Phi^+_{\Sin\Sin'})\ge P_{\Sout\Sin'};\\\label{Eq: lemma condition002}
&(\mathcal{E}_{b|1}\otimes\mathcal{I}_{\Sin'})(\Phi^+_{\Sin\Sin'})\ge P_{\Sout\Sin'}.
\end{align}
\end{lemma}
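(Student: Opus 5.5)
We prove both directions using the Choi–Jamio\l kowski isomorphism. The only nontrivial input is Lemma~\ref{lemma: P_{SS'} filter}, which realises any nonzero positive operator as a positive multiple of the Choi operator of a filter.

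\emph{($\Rightarrow$)} Suppose $\mathbfcal{E}$ has non-vanishing overlap. Then there are $(a,b)$ and a filter $\mathcal{L}\neq0$ such that $\mathcal{E}_{a|0}-\mathcal{L}$ and $\mathcal{E}_{b|1}-\mathcal{L}$ are completely positive. Set
\begin{align}
P_{\Sout\Sin'}\coloneqq(\mathcal{L}\otimes\mathcal{I}_{\Sin'})(\Phi^+_{\Sin\Sin'}).
\end{align}
This operator is positive because $\mathcal{L}$ is completely positive. It is nonzero because the Choi map is injective and $\mathcal{L}\neq0$.

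Next, apply the completely positive maps $\mathcal{E}_{a|0}-\mathcal{L}$ and $\mathcal{E}_{b|1}-\mathcal{L}$, each tensored with $\mathcal{I}_{\Sin'}$, to $\Phi^+_{\Sin\Sin'}$. The outputs are positive operators. By linearity, this is exactly Eqs.~\eqref{Eq: lemma condition001} and~\eqref{Eq: lemma condition002}.

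\emph{($\Leftarrow$)} Suppose $P_{\Sout\Sin'}\ge0$ with $P\neq0$ satisfies both inequalities. Apply Lemma~\ref{lemma: P_{SS'} filter} with $A=\Sout$ and $B=\Sin$. This gives $\alpha>0$ and a nonzero $\Sin\to\Sout$ filter $\mathcal{L}'$ such that $P=\alpha(\mathcal{L}'\otimes\mathcal{I}_{\Sin'})(\Phi^+_{\Sin\Sin'})$.

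The natural candidate for the overlap is $\alpha\mathcal{L}'$, but it may fail to be trace-non-increasing when $\alpha>1$. We avoid this by defining the overlap filter directly as $\mathcal{L}\coloneqq\alpha\mathcal{L}'$ and checking the filter property from the dominance relation instead. Its Choi operator is $P$, and $P\le J_{a|0}^{(\mathbfcal{E})}$.

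The operators $J_{a|0}^{(\mathbfcal{E})}-P$ and $J_{b|1}^{(\mathbfcal{E})}-P$ are positive, so by Choi's theorem the maps $\mathcal{E}_{a|0}-\mathcal{L}$ and $\mathcal{E}_{b|1}-\mathcal{L}$ are completely positive. In particular $\mathcal{L}$ is completely positive, since its Choi operator $P$ is positive. Finally, $\mathcal{E}_{a|0}-\mathcal{L}$ is completely positive, hence trace-preserving on positive inputs in the sense that
\begin{align}
{\rm tr}[\mathcal{L}(\rho)]\le{\rm tr}[\mathcal{E}_{a|0}(\rho)]\le{\rm tr}(\rho)\quad\forall\,\rho\ge0.
\end{align}
So $\mathcal{L}$ is trace-non-increasing, i.e.\ a genuine filter, and $\mathcal{L}\neq0$ because $P\neq0$. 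This is exactly a non-vanishing overlap in the sense of Eq.~\eqref{Eq:common part 01}.

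The main obstacle is this normalisation subtlety in the converse direction. It is resolved by reading the filter property off from domination by $\mathcal{E}_{a|0}$, so $\alpha$ never needs to be bounded by one. The remaining care is bookkeeping of the systems ($\Sin$ versus $\Sin'$) and applying the lemma with the correct identification $A=\Sout$, $B=\Sin$.
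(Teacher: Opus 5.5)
Your proof is correct, and your forward direction is the same as the paper's. In the converse you handle the normalisation differently. The paper takes the filter $\widetilde{\mathcal{L}}$ produced by Lemma~\ref{lemma: P_{SS'} filter} and splits on $\alpha$: it uses $\alpha\widetilde{\mathcal{L}}$ when $\alpha\le1$ and $\widetilde{\mathcal{L}}$ itself when $\alpha>1$. The filter property is thus inherited from the lemma's construction. The price is that the resulting filter's Choi operator is only dominated by $P_{\Sout\Sin'}$ rather than equal to it.

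You instead keep the map $\alpha\mathcal{L}'$, whose Choi operator is exactly $P_{\Sout\Sin'}$, and deduce trace-non-increase afterwards from the complete positivity of $\mathcal{E}_{a|0}-\mathcal{L}$. This gives a slightly sharper statement: every nonzero feasible $P$ of the SDP in Eq.~\eqref{Eq:SDP} is itself the Choi operator of an overlap filter. It also makes Lemma~\ref{lemma: P_{SS'} filter} unnecessary. Choi's theorem alone gives a completely positive map with Choi operator $P_{\Sout\Sin'}$, and domination by the filter $\mathcal{E}_{a|0}$ does the rest.

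One wording slip: ``completely positive, hence trace-preserving on positive inputs'' is not what you mean. The point is that $\mathcal{E}_{a|0}-\mathcal{L}$ maps positive operators to positive operators, so ${\rm tr}[\mathcal{L}(\rho)]\le{\rm tr}[\mathcal{E}_{a|0}(\rho)]$. The inequality you actually display is correct.
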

\begin{proof}
\CYtwo{By definition [Eq.~\eqref{Eq:common part 01} in the main text]},
$\mathbfcal{E}$ has non-vanishing overlap if and only if 
\CYtwo{
$\mathcal{E}_{a|0}-\mathcal{L}$ and $\mathcal{E}_{b|1}-\mathcal{L}$ are both completely-positive
for some outcome index pair $(a,b)$ and non-vanishing $\Sin\to\Sout$ filter $\mathcal{L}\neq0$.}
This is true if and only if, via the Choi-Jamio\l kowski isomorphism~\cite{Choi1975,Jamiolkowski1972}, 
\begin{align}\label{Eq: Choi common Q inf def001}
&[(\mathcal{E}_{a|0}-\CYtwo{\mathcal{L}})\otimes\mathcal{I}_{\Sin'}](\Phi^+_{\Sin\Sin'})\ge 0;\\\label{Eq: Choi common Q inf def002}
&[(\mathcal{E}_{b|1}-\CYtwo{\mathcal{L}})\otimes\mathcal{I}_{\Sin'}](\Phi^+_{\Sin\Sin'})\ge 0
\end{align}
\CYtwo{for, again, some outcome index pair $(a,b)$ and non-vanishing $\Sin\to\Sout$ filter $\mathcal{L}\neq0$.}
We thus obtain a statement that is necessary and sufficient to $\mathbfcal{E}$ \CY{having non-vanishing overlap.}
With the above observation, we can now prove the lemma.

({\em Proof of the direction ``$\Rightarrow$''}) If $\mathbfcal{E}$ \CY{has non-vanishing overlap}, then the result follows by setting 
\begin{align}
P_{\Sout\Sin'} = (\CYtwo{\mathcal{L}}\otimes\mathcal{I}_{\Sin'})(\Phi^+_{\Sin\Sin'})\ge0, 
\end{align}
which is non-vanishing if $\mathbfcal{E}$ \CY{has non-vanishing overlap.}

({\em Proof of the direction ``$\Leftarrow$''}) Suppose Eqs.~\eqref{Eq: lemma condition001} and~\eqref{Eq: lemma condition002} hold for some outcome index pair $(a,b)$ and $P_{\Sout\Sin'}\ge0$ with $P_{\rm \Sout\Sin'}\neq0$.
Then, Lemma~\ref{lemma: P_{SS'} filter} implies that there exist a strictly positive value $\alpha>0$ and a non-vanishing $\Sin\to\Sout$ filter \CYtwo{$\widetilde{\mathcal{L}}\neq0$} achieving 
\begin{align}
P_{\Sout\Sin'} = \alpha(\CYtwo{\widetilde{\mathcal{L}}}\otimes\mathcal{I}_{\Sin'})(\Phi^+_{\Sin\Sin'}).
\end{align}
Now, we define
\begin{align}
\mathcal{L}\coloneqq\begin{cases}\;\alpha\widetilde{\mathcal{L}}\quad&{\rm if}\;\alpha\le1\\\;\widetilde{\mathcal{L}}\quad&{\rm if}\;\alpha>1,\end{cases}
\end{align}
which is a non-vanishing $\Sin\to\Sout$ filter achieving
\begin{align}
P_{\Sout\Sin'} \ge (\mathcal{L}\otimes\mathcal{I}_{\Sin'})(\Phi^+_{\Sin\Sin'}).
\end{align}
Substituting this into Eqs.~\eqref{Eq: lemma condition001} and~\eqref{Eq: lemma condition002} will give us Eqs.~\eqref{Eq: Choi common Q inf def001} and~\eqref{Eq: Choi common Q inf def002}, meaning that $\mathbfcal{E}$ indeed \CY{has non-vanishing overlap}.
The proof is thus completed.
\end{proof}

The above lemma thus implies \CYtwo{Result~\ref{Result:Characterising complementarity}} as a corollary:

\subsubsection{Proof of \CYtwo{Result~\ref{Result:Characterising complementarity}}}
\begin{proof}
Let us use the notation defined in the main text 
\begin{align}\label{Eq:Choi operator diff input output}
\Choi_{a|x}^{(\mathbfcal{E})}\coloneqq(\mathcal{E}_{a|x}\otimes\mathcal{I}_{\Sin'})(\Phi^+_{\Sin\Sin'})\quad\forall\,a,x.
\end{align}
From Lemma~\ref{lemma: Choi cahracterisation common Q part}, a pair of $\Sin\to\Sout$ instruments $\mathbfcal{E}$ is complementary (i.e., \CY{has {\em vanishing} overlap}) if and only if the following statement is true for {\em every} outcome index pair $(a,b)$ and {\em every} positive semi-definite operator $P_{\Sout\Sin'}\ge0$:
\begin{align}\label{Eq:Complementary instruments defining condition}
\Choi_{a|0}^{(\mathbfcal{E})}\ge P_{\Sout\Sin'}\;\;\&\;\;
\Choi_{b|1}^{(\mathbfcal{E})}\ge P_{\Sout\Sin'}
\;\Longrightarrow\;\CY{P_{\Sout\Sin'}=0}.
\end{align}
\CYtwo{This thus proves Result~\ref{Result:Characterising complementarity}.}
\end{proof}

\subsection{Supplemental Material II: Proof of Result~\ref{Result:relation}}
\begin{proof}
It suffices to show that compatibility [Eq.~\eqref{Eq:JM} in the main text] implies \CY{non-vanishing overlap}
\CYtwo{[Eq.~\eqref{Eq:common part 01} in the main text]}.
Suppose $\mathbfcal{E}$ \CYtwo{(with finite-dimensional input and output systems $\Sin$ and $\Sout$, respectively)} is compatible, meaning that we can write \mbox{$\mathcal{E}_{a|x} = \sum_\lambda P(a|x,\lambda)\mathcal{G}_\lambda$} $\forall\,a,x$ for an \CYtwo{$\Sin\to\Sout$} instrument $\{\mathcal{G}_\lambda\}_\lambda$ and conditional probability distributions $\{P(a|x,\lambda)\}_{a,x,\lambda}$.
Then, applying it on half of the maximally entangled state $\ket{\Phi^+}_{\CYtwo{\Sin\Sin'}}$ [\CYtwo{with $S=\Sin$ in} Eq.~\eqref{Eq: max ent state}], we obtain
\begin{align}\label{Eq:result 2 proof 001}
(\mathcal{E}_{a|x}\otimes\mathcal{I}_{\CYtwo{\Sin'}})(\Phi_{\CYtwo{\Sin\Sin'}}^+) = \tau_{a|x}^{\rm LHS}\quad\forall\,a,x,
\end{align}
where
\begin{align}\label{Eq:clasiscal model}
\tau_{a|x}^{\rm LHS}\coloneqq\sum_\lambda P(a|x,\lambda)(\mathcal{G}_\lambda\otimes\mathcal{I}_{\CYtwo{\Sin'}})(\Phi^+_{\CYtwo{\Sin\Sin'}})\quad\forall\,a,x
\end{align}
are described by the {\em local-hidden-state} (LHS) model~\cite{Wiseman2007PRL,Jones2007PRA,Uola2020RMP,Cavalcanti2016,Xiang2022PRXQuantum} \CYtwo{in the bipartite system $\Sout\Sin'$} [see also Eq.~\eqref{Eq:LHS} in \CY{Appendix D}]. Via a new local-hidden variable $\mu$, this can be rewritten into~\cite{Uola2020RMP,Cavalcanti2016} (see also Fact 2 in Supplemental Material of Ref.~\cite{Hsieh2024PRL})
\begin{align}\label{Eq:post-mu}
\tau_{a|x}^{\rm LHS} = \sum_\mu D(a|x,\mu)\omega_{\CYtwo{\Sout\Sin'}}^{(\mu)}\quad\forall\,a,x,
\end{align}
where $D(a|x,\mu)$'s are deterministic probability distributions (which only take the value either $0$ or $1$), and $\omega_{\CYtwo{\Sout\Sin'}}^{(\mu)}\ge0$ are sub-normalised state such that $\sum_\mu\omega_{\CYtwo{\Sout\Sin'}}^{(\mu)}$ is a state.

Crucially, for every given $\mu$, there must exist two outcome indices $a_0{(\mu)},a_1{(\mu)}$ achieving 
\begin{align}\label{Eq: each mu has at least one a index}
D(a_0{(\mu)}|0,\mu) = 1 = D(a_1{(\mu)}|1,\mu);
\end{align}
that is, we have $D(a_x(\mu)|x,\mu)=1$ for $x=0,1$.
Also, there must exist at least one value of $\mu$ (denote it by $\mu_*$) achieving $\omega_{\CYtwo{\Sout\Sin'}}^{(\mu_*)}\neq0$ (if not, we then have $\tau_{a|x}^{\rm LHS}=0$ $\forall\,a,x$, implying $\mathcal{E}_{a|x}=0$ $\forall\,a,x$, a contradiction).
Combining everything together, we thus obtain, for $x=0,1$,
\begin{align}
&(\mathcal{E}_{a_x{(\mu_*)}|x}\otimes\mathcal{I}_{\CYtwo{\Sin'}})(\Phi_{\CYtwo{\Sin\Sin'}}^+) = \sum_\mu D(a_x{(\CYtwo{\mu_*})}|x,\mu)\omega_{\CYtwo{\Sout\Sin'}}^{(\mu)}\nonumber\\
&\quad\ge D(a_x{(\mu_*)}|x,\CYtwo{\mu=\mu_*})\omega_{\CYtwo{\Sout\Sin'}}^{(\CYtwo{\mu=\mu_*})}=\omega_{\CYtwo{\Sout\Sin'}}^{(\mu_*)}\neq0.
\end{align}
Since $\omega_{\CYtwo{\Sout\Sin'}}^{(\mu_*)}\ge0$ and it is not vanishing, we conclude that ${\rm tr}\left(\omega_{\CYtwo{\Sout\Sin'}}^{(\mu_*)}\right)>0$.
This means that the maximisation in Eq.~\eqref{Eq:SDP} in the main text with $a=a_0(\mu_*)$ and $b=a_1(\mu_*)$ must output a strictly positive value (as $P=\omega_{\CYtwo{\Sout\Sin'}}^{(\mu_*)}$ is already a feasible solution). By applying Result~\ref{Result:Characterising complementarity} in the main text, we conclude that $\mathbfcal{E}$ must 
\CY{have non-vanishing overlap}
(i.e., $\mathbfcal{E}$ is not complementary).
The proof is thus completed.
\end{proof}
\subsection{Supplemental Material III: Proof of Result~\ref{Result:no-go}}
\CYtwo{Since now, as long as $\mathbfcal{E}$ is rank-one projective, it is by definition a pair of instruments with the same input and output spaces, which is denoted by $\Sin=\Sout=S$ with dimension $d<\infty$.}
Before the proof, we need the following lemma:

\begin{lemma}\label{lemma:rank-one projective}
Let $\mathbfcal{E}$ be a pair of \CYtwo{rank-one projective} instruments \CYtwo{acting on $S$} and $\{\ket{a}\}_a$ be a given computational basis \CYtwo{of $S$}.
\CYtwo{Then,} there is a unitary channel $\mathcal{U}(\cdot) \coloneqq U(\cdot)U^\dagger$ with some unitary operator $U$ such that 
\begin{align}\label{Eq: clean form 01}
\mathcal{U}\circ\mathcal{E}_{a|0}\circ\mathcal{U}^\dagger(\cdot) = \proj{a}(\cdot)\proj{a}\quad\forall\,a,
\end{align}
and there is an orthonormal basis $\{\ket{\phi_a}\}_a$ achieving
\begin{align}\label{Eq: clean form 02}
\mathcal{U}\circ\mathcal{E}_{a|1}\circ\mathcal{U}^\dagger(\cdot) = \proj{\phi_a}(\cdot)\proj{\phi_a}\quad\forall\,a.
\end{align}
Finally, if $\mathbfcal{E}$ is also complementary, then 
\begin{align}\label{Eq: complementary rank-one projective condition}
|\braket{a}{\phi_b}|<1\quad \forall\,a,b.
\end{align}
\end{lemma}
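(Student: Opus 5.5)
The plan has three steps: build $U$ directly from the first basis, transport the second basis through it, and then check the overlap condition using Result~\ref{Result:Characterising complementarity}.

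\emph{Step 1: the unitary.} Write $\mathcal{E}_{a|x}(\cdot)=\proj{\psi_a^{(x)}}(\cdot)\proj{\psi_a^{(x)}}$ as in Eq.~\eqref{Eq: Def rank-one projective}. Define
\begin{align}
U\coloneqq\sum_a\ket{a}\bra{\psi_a^{(0)}},
\end{align}
which is unitary because it maps one orthonormal basis onto another. Then $U\ket{\psi_a^{(0)}}=\ket{a}$, and $\mathcal{U}\circ\mathcal{E}_{a|0}\circ\mathcal{U}^\dagger(\cdot)=U\proj{\psi_a^{(0)}}U^\dagger(\cdot)U\proj{\psi_a^{(0)}}U^\dagger=\proj{a}(\cdot)\proj{a}$, which is Eq.~\eqref{Eq: clean form 01}. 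Setting $\ket{\phi_a}\coloneqq U\ket{\psi_a^{(1)}}$ gives an orthonormal basis, and the same computation gives Eq.~\eqref{Eq: clean form 02}.

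\emph{Step 2: reduce to the conjugated pair.} Prove the last claim by contraposition. Suppose $|\braket{a}{\phi_b}|=1$ for some $(a,b)$. Equality in Cauchy--Schwarz forces $\ket{\phi_b}=e^{i\theta}\ket{a}$, so $\proj{\phi_b}=\proj{a}$. Undoing the conjugation by $U$ then gives $\proj{\psi_a^{(0)}}=\proj{\psi_b^{(1)}}$.

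\emph{Step 3: exhibit a non-vanishing overlap.} From Step 2 we get $\mathcal{E}_{a|0}=\mathcal{E}_{b|1}$ as maps. Take $\mathcal{L}\coloneqq\mathcal{E}_{a|0}$; it is a filter and it is non-vanishing. Then $\mathcal{E}_{a|0}-\mathcal{L}=0$ and $\mathcal{E}_{b|1}-\mathcal{L}=0$ are both trivially completely positive, so Eq.~\eqref{Eq:common part 01} holds. Equivalently, in the language of Result~\ref{Result:Characterising complementarity}, $P=\Choi_{a|0}^{(\mathbfcal{E})}$ is feasible in Eq.~\eqref{Eq:SDP} and has trace $1/d>0$. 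Hence $\mathbfcal{E}$ is not complementary, which contradicts the hypothesis and proves Eq.~\eqref{Eq: complementary rank-one projective condition}.

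No step is a real obstacle. The only care needed is in Step 2: recognising that modulus one forces the two vectors to coincide up to a phase, and noting that the phase cancels in the projector.
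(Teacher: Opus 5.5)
Your proof is correct and follows essentially the same route as the paper: the same unitary $U=\sum_a\ket{a}\bra{\psi_a^{(0)}}$, the same transported basis $\ket{\phi_a}=U\ket{\psi_a^{(1)}}$, and the same contradiction in which $|\braket{a}{\phi_b}|=1$ forces $\proj{\phi_b}=\proj{a}$. The only difference is how you certify the overlap. You undo the conjugation on the projectors and take $\mathcal{L}=\mathcal{E}_{a|0}$ directly in the definition. The paper instead computes the Choi operators in the rotated frame, pulls them back using $(U\otimes\id)\ket{\Phi^+}=(\id\otimes U^t)\ket{\Phi^+}$, and invokes Lemma~\ref{lemma: Choi cahracterisation common Q part}. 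Your version is slightly more direct.
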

\begin{proof}
By definition [i.e., Eq.~\eqref{Eq: Def rank-one projective} in the main text], if $\mathbfcal{E}$ is rank-one projective, then we can write
\begin{align}
\mathcal{E}_{a|x}(\cdot) = \proj{\psi_a^{(x)}}(\cdot)\proj{\psi_a^{(x)}}\quad\forall\,a,x, 
\end{align}
and for each $x=0,1$ the set $\{\ket{\psi_a^{(x)}}\}_a$ is an orthonormal basis.
With the given computational basis $\{\ket{a}\}_a$, consider the unitary operator 
\begin{align}
U\coloneqq\sum_n\ket{n}\bra{\psi_n^{(0)}}.
\end{align}
The corresponding unitary channel $\mathcal{U}(\cdot)\coloneqq U(\cdot)U^\dagger$ achieves
\begin{align}
\mathcal{U}\circ\mathcal{E}_{a|0}\circ\mathcal{U}^\dagger(\cdot) = \proj{a}(\cdot)\proj{a}\quad\forall\,a.
\end{align}
For $x=1$, define the state 
\begin{align}
\ket{\phi_a}\coloneqq U\ket{\psi_a^{(1)}}\quad\forall\,a.
\end{align}
Then, $\{\ket{\phi_a}\}_a$ is by construction an orthonormal basis and
\begin{align}
\mathcal{U}\circ\mathcal{E}_{a|1}\circ\mathcal{U}^\dagger(\cdot) = \proj{\phi_a}(\cdot)\proj{\phi_a}\quad\forall\,a.
\end{align}

Finally, to see Eq.~\eqref{Eq: complementary rank-one projective condition}, suppose the opposite; namely, suppose $\mathbfcal{E}$ is also complementary but $|\braket{a}{\phi_b}|=1$ for some $a,b$.
This means $\ket{a}=c\ket{\phi_b}$ for some complex number $c$ with $|c|=1$, and, consequently, $\proj{a}=\proj{\phi_b}$.
Then, for this pair of outcome indices, we have that
\begin{align}
&\left[(\mathcal{U}\circ\mathcal{E}_{a|0}\circ\mathcal{U}^\dagger)_S\otimes\mathcal{I}_{S'}\right](\Phi^+_{SS'})=\proj{aa}_{SS'}/d\nonumber\\
&\quad=\left[(\mathcal{U}\circ\mathcal{E}_{b|1}\circ\mathcal{U}^\dagger)_S\otimes\mathcal{I}_{S'}\right](\Phi^+_{SS'}),
\end{align}
where $d$ is the dimension of $S$ and $S'$.
Using the property that $(U_S^\dagger\otimes\id_{S'})\ket{\Phi^+}_{SS'} = (\id_S\otimes U^{\dagger,t}_{S'})\ket{\Phi^+}_{SS'}$ and write $\mathcal{U}^{t}(\cdot)\coloneqq U^{t}(\cdot)U^{\dagger,t}$, we have
\begin{align}
\left(\mathcal{E}_{a|0}\otimes\mathcal{I}_{S'}\right)(\Phi^+_{SS'})&=\left(\mathcal{U}^\dagger_S\otimes\mathcal{U}^{t}_{S'}\right)(\proj{aa}_{SS'})/d\nonumber\\
&=\left(\mathcal{E}_{b|1}\otimes\mathcal{I}_{S'}\right)(\Phi^+_{SS'}).
\end{align}
By setting $P_{SS'}=\left(\mathcal{U}^\dagger_S\otimes\mathcal{U}^{t}_{S'}\right)(\proj{aa}_{SS'})/d$, which is non-vanishing and positive semi-definite, Lemma~\ref{lemma: Choi cahracterisation common Q part} implies that $\mathbfcal{E}$ is {\em not} complementary (as it has non-vanishing overlap), resulting in a contradiction.
This shows that $|\braket{a}{\phi_b}|<1$ for {\em every} $a,b$ if $\mathbfcal{E}$ is complementary.
The proof is completed. 
\end{proof}

\CYtwo{Now, using Eqs.~\eqref{Eq: Choi common Q inf def001} and~\eqref{Eq: Choi common Q inf def002} with $\Sin=\Sout=S$, we have 
\begin{fact}\label{fact:unitary equi}
A pair of instruments $\mathbfcal{E}$ is complementary if and only if \mbox{$\{\mathcal{U}\circ\mathcal{E}_{a|x}\circ\mathcal{U}^\dagger\}_{a,x}$} is complementary for every unitary channel $\mathcal{U}$.
\end{fact}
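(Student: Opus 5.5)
The plan is to show that conjugating every filter by the same unitary channel is a bijection on non-vanishing overlaps. It will then follow that $\mathbfcal{E}$ has non-vanishing overlap if and only if $\mathbfcal{E}^{\mathcal{U}}\coloneqq\{\mathcal{U}\circ\mathcal{E}_{a|x}\circ\mathcal{U}^\dagger\}_{a,x}$ does. The statement is the contrapositive of this.

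First, fix an arbitrary unitary channel $\mathcal{U}(\cdot)=U(\cdot)U^\dagger$, whose inverse is $\mathcal{U}^\dagger(\cdot)=U^\dagger(\cdot)U$. Both are channels, so $\mathbfcal{E}^{\mathcal{U}}$ is again a valid pair of $S\to S$ instruments.

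Next, suppose $\mathbfcal{E}$ has non-vanishing overlap. By definition [Eq.~\eqref{Eq:common part 01}], equivalently Eqs.~\eqref{Eq: Choi common Q inf def001}--\eqref{Eq: Choi common Q inf def002}, there are $(a,b)$ and a filter $\mathcal{L}\neq0$ with $\mathcal{E}_{a|0}-\mathcal{L}$ and $\mathcal{E}_{b|1}-\mathcal{L}$ completely positive. Set $\mathcal{L}'\coloneqq\mathcal{U}\circ\mathcal{L}\circ\mathcal{U}^\dagger$. This map is completely positive, being a composition of CP maps. It is trace-non-increasing, since ${\rm tr}[\mathcal{L}'(\rho)]={\rm tr}[\mathcal{L}(U^\dagger\rho U)]\le{\rm tr}(\rho)$. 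It is non-vanishing, because $\mathcal{L}=\mathcal{U}^\dagger\circ\mathcal{L}'\circ\mathcal{U}$. Moreover,
\begin{align}
\mathcal{U}\circ\mathcal{E}_{a|0}\circ\mathcal{U}^\dagger-\mathcal{L}'=\mathcal{U}\circ(\mathcal{E}_{a|0}-\mathcal{L})\circ\mathcal{U}^\dagger,
\end{align}
and this is completely positive because composing a CP map with unitary channels on either side preserves complete positivity. The same identity holds for $b$ and $x=1$. Hence $\mathbfcal{E}^{\mathcal{U}}$ has non-vanishing overlap.

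Finally, apply the same argument with $\mathcal{U}^\dagger$ in place of $\mathcal{U}$, noting that $(\mathbfcal{E}^{\mathcal{U}})^{\mathcal{U}^\dagger}=\mathbfcal{E}$. This gives the converse direction. Therefore $\mathbfcal{E}$ is complementary if and only if $\mathbfcal{E}^{\mathcal{U}}$ is, and since $\mathcal{U}$ was arbitrary, the statement follows.

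There is no real obstacle here. The only points needing care are checking that $\mathcal{L}'$ is still a valid non-zero filter and that linearity lets the unitaries be pulled outside the difference. Alternatively, one could argue at the level of Choi operators, as in Result~\ref{Result:Characterising complementarity}, using $(U^\dagger\otimes\id)\ket{\Phi^+}=(\id\otimes U^{\dagger,t})\ket{\Phi^+}$. In that picture $J_{a|x}$ maps to $(U\otimes U^{*})J_{a|x}(U\otimes U^{*})^\dagger$, which is a positivity-preserving bijection that sends feasible $P$ in Eq.~\eqref{Eq:SDP} to feasible $P$.
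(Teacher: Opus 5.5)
Your proof is correct and is essentially the paper's argument. The paper simply invokes the overlap condition in Eqs.~\eqref{Eq: Choi common Q inf def001}--\eqref{Eq: Choi common Q inf def002} with $\Sin=\Sout=S$, leaving implicit the step you verify explicitly: $\mathcal{L}\mapsto\mathcal{U}\circ\mathcal{L}\circ\mathcal{U}^\dagger$ is invertible, keeps $\mathcal{L}$ a non-vanishing filter, and preserves complete positivity of $\mathcal{E}_{a|x}-\mathcal{L}$ (your $U\otimes U^{*}$ Choi-level remark is the same fact in that picture).
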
}
Hence, Lemma~\ref{lemma:rank-one projective} \CYtwo{and Fact~\ref{fact:unitary equi} jointly imply} that if the given pair of instruments is rank-one projective, then, without loss of generality, we can always assume a relatively clean form for it [namely, Eqs.~\eqref{Eq: clean form 01} and~\eqref{Eq: clean form 02}].
Now, we prove Result~\ref{Result:no-go}.

\subsubsection{Proof of Result~\ref{Result:no-go}}
\begin{proof}
Suppose the opposite: $\mathbfcal{E}$ is complementary, and it can be expressed by Eq.~\eqref{Eq:post-processing} in the main text.
This means there are channels $\{\mathcal{M}_x\}_x$, conditional probabilities $\{P(a|x,\lambda)\}_{a,x,\lambda}$, and a single instrument $\{\mathcal{G}_\lambda\}_\lambda$ achieving
$\mathcal{E}_{a|x} = \sum_\lambda P(a|x,\lambda)\mathcal{M}_x\circ\mathcal{G}_\lambda$ $\forall\,a,x.$
By acting on half of the maximally entangled state $\ket{\Phi^+}_{SS'}$ [Eq.~\eqref{Eq: max ent state}], we obtain (note that \CYtwo{both $\mathbfcal{E}$'s input and output spaces are the $d$-dimensional system $S$}) 
\begin{align}\label{Eq:post-computation001}
&\left(\mathcal{E}_{a|x}\otimes\mathcal{I}_{S'}\right)(\Phi^+_{SS'}) = (\mathcal{M}_x\otimes\mathcal{I}_{S'})\left(\tau_{a|x}^{\rm LHS}\right)\nonumber\\
&\quad= \sum_\mu D(a|x,\mu)(\mathcal{M}_x\otimes\mathcal{I}_{S'})\left(\omega_{SS'}^{(\mu)}\right),
\end{align}
where $\tau_{a|x}^{\rm LHS}$'s are some LHS model given by Eq.~\eqref{Eq:clasiscal model}, and we have 
used Eq.~\eqref{Eq:post-mu} here.
Since $\mathbfcal{E}$ is complementary and rank-one projective, Lemma~\ref{lemma:rank-one projective} \CYtwo{and Fact~\ref{fact:unitary equi} imply that,} without loss of generality, we can assume $\mathbfcal{E}$ takes the following form
\begin{align}\label{Eq:rank-one01}
&\mathcal{E}_{a|0}(\cdot) = \proj{a}(\cdot)\proj{a}\quad\forall\,a;\\\label{Eq:rank-one02}
&\mathcal{E}_{a|1}(\cdot) = \proj{\phi_a}(\cdot)\proj{\phi_a}\quad\forall\,a,
\end{align}
where $\{\ket{a}\}_a$ is the computational basis, and $\{\ket{\phi_a}\}_a$ is an othonormal basis satisfying
$
|\braket{a}{\phi_b}|<1$
$\forall\,a,b.
$
Then, setting $x=0$ in Eq.~\eqref{Eq:post-computation001} and taking the partial trace ${\rm tr}_{S}$ gives
\begin{align}
\proj{a}_{S'}/d = \sum_\mu D(a|0,\mu){\rm tr}_{S}\left(\omega_{SS'}^{(\mu)}\right)\quad\forall\,a.
\end{align}
For every $\mu$, there is \CYtwo{exactly one} value of $a$, \CYtwo{denoted by $a_0(\mu)$ as in Eq.~\eqref{Eq: each mu has at least one a index},} achieving $D(\CYtwo{a=a_0(\mu)}|0,\mu)=1$.
Hence,
\begin{align}\label{Eq: a_mu}
\proj{\CYtwo{a_0(\mu)}}_{S'}/d\ge{\rm tr}_{S}\left(\omega_{SS'}^{(\mu)}\right)\quad\forall\,\mu.
\end{align}
This means that, for some $\delta_\mu\ge0$, 
\begin{align}\label{Eq:omega local condition}
{\rm tr}_{S}\left(\omega_{SS'}^{(\mu)}\right) = \delta_\mu\proj{\CYtwo{a_0(\mu)}}_{S'}\quad\forall\,\mu.
\end{align}
Now, setting $x=1$ in Eq.~\eqref{Eq:post-computation001} and taking ${\rm tr}_{S}$ gives
\begin{align}\label{Eq:omega local condition2}
\CYtwo{\left(\proj{\phi_b}_{S'}\right)^t}/d = \sum_\mu D(b|1,\mu){\rm tr}_{S}\left(\omega_{SS'}^{(\mu)}\right)\quad\forall\,b,
\end{align}
where \CYtwo{$b$ denote the outcome index for $x=1$, and $(\cdot)^t$ is the transpose with respect to the computational basis $\{\ket{a}\}_a$}.
Using Eq.~\eqref{Eq:omega local condition}
\CYtwo{and take the transpose $(\cdot)^t$ on both sides of Eq.~\eqref{Eq:omega local condition2}}, we thus conclude that, for every $b$,
\begin{align}\label{Eq: diagonal condition}
\text{$\proj{\phi_b}$ is diagonal in the basis $\{\ket{a}\}_a$.}
\end{align}
For a fixed index $b$, $\ket{\phi_b}$ is a normalised state, meaning that
\begin{align}
1=\braket{\phi_b}{\phi_b} = \CYtwo{\sum_a|\braket{a}{\phi_b}|^2.}
\end{align}
Together with $|\braket{a}{\phi_b}|<1$ $\forall\,a$ [Eq.~\eqref{Eq: complementary rank-one projective condition}], 
we learn that there must be at least two different indices $a\neq a'$ achieving 
\begin{align}\label{Eq:off-diagonal terms}
|\braket{a}{\phi_b}|\neq0\neq|\braket{a'}{\phi_b}|.
\end{align}
This thus implies that $\braket{a}{\phi_b}\braket{\phi_b}{a'}\neq0$; namely, $\proj{\phi_b}$ has at least one non-vanishing off-diagonal term with respect to the computational basis $\{\ket{a}\}_a$, which contradicts with Eq.~\eqref{Eq: diagonal condition}.
This thus completes the proof.
\end{proof}

\subsection{Supplemental Material IV: Proof of Result~\ref{Result:noise robustness}}
\begin{proof}
Suppose the noisy Pauli instruments $\mathbfcal{P}^{(\epsilon)}$ [Eq.~\eqref{Eq:noisy Pauli} in the main text] can be simulated by quantum post-processing; namely, $\mathbfcal{P}^{(\epsilon)}\in{\bf JM}_{\rm post}$.
Then, using Eq.~\eqref{Eq:post-computation001} and the definition of $\mathbfcal{P}^{(\epsilon)}$, we obtain (below, $H\coloneqq \ket{+}\bra{0}+\ket{-}\bra{1}$ is the Hadamard gate; \CYtwo{also, here, both $S$ and $S'$ are qubits})
\begin{align}
&\left(\mathcal{P}_{a|x}^{(\epsilon)}\otimes\mathcal{I}_{S'}\right)(\Phi^+_{SS'})\nonumber\\
&\quad=\frac{1-\epsilon}{2}(H_S\otimes H_{S'})^x\proj{aa}_{SS'}(H_S\otimes H_{S'})^x+\frac{\epsilon\id_{SS'}}{8}\nonumber\\
&\quad= \sum_\mu D(a|x,\mu)(\mathcal{M}_x\otimes\mathcal{I}_{S'})\left(\omega_{SS'}^{(\mu)}\right)\quad\forall\,a,x.
\end{align}
Taking the partial trace over $S$ gives
\begin{align}\label{Eq: local formula S'}
\frac{1-\epsilon}{2}H_{S'}^x\proj{a}_{S'} H_{S'}^x+\frac{\epsilon\id_{S'}}{4}= \sum_\mu D(a|x,\mu)\tr_{S}\left(\omega_{SS'}^{(\mu)}\right)\;\forall\,a,x.
\end{align}
For simplicity, let us write 
\begin{align}
\widetilde{\omega}^{(\mu)}\coloneqq\tr_{S}\left(\omega_{SS'}^{(\mu)}\right)\quad\forall\,\mu
\end{align} 
and drop the subscript since everything will be in the system $S'$ since now.
When $x=0=a$, we have 
\begin{align}
\frac{1-\epsilon}{2}\proj{0}+\frac{\epsilon\id}{4}= \sum_\mu D(0|0,\mu)\widetilde{\omega}^{(\mu)}.
\end{align}
Now, we observe that there are {\em exactly two} deterministic probability distributions that achieve $D(0|0,\mu)=1$ \CYtwo{(i.e., one maps $x=1$ to $a=0$, and another maps $x=1$ to  $a=1$)}.
This means there are exactly two different $\mu$'s, say, $\mu_+$ and $\mu_-$, such that \CYtwo{$D(0|0,\mu_+)=1=D(0|0,\mu_-)$ and, consequently,}
\begin{align}\label{Eq: exactly two mu eq}
\frac{1-\epsilon}{2}\proj{0}+\frac{\epsilon\id}{4}= \widetilde{\omega}^{(\mu_+)}+ \widetilde{\omega}^{(\mu_-)}.
\end{align}
Hence, there must be one of them (and, without loss of generality, suppose it is $\mu_+$) such that $\widetilde{\omega}^{(\mu_+)}$'s greatest eigenvalue is no less than half of the greatest eigenvalue of \mbox{$[(1-\epsilon)/2]\proj{0}+{\epsilon\id/4}$.}
Namely, we have
\begin{align}\label{Eq:eigenvalue bound}
\norm{\widetilde{\omega}^{(\mu_+)}}_\infty\ge\frac{1}{2}\norm{\frac{1-\epsilon}{2}\proj{0}+\frac{\epsilon\id}{4}}_\infty=\frac{2-\epsilon}{8},
\end{align}
where $\norm{P}_\infty$ is the sup norm of the operator $P$, which is $|P|$'s largest eigenvalue.
Suppose $\ket{\omega_+}$ is $\widetilde{\omega}^{(\mu_+)}$'s eigenstate with the above eigenvalue.
That is,
$
\widetilde{\omega}^{(\mu_+)}\ket{\omega_+}=\norm{\widetilde{\omega}^{(\mu_+)}}_\infty\ket{\omega_+}.
$
This also means that 
\begin{align}
\CYtwo{\norm{\widetilde{\omega}^{(\mu_+)}}_\infty\proj{\omega_+}\le\widetilde{\omega}^{(\mu_+)}.}
\end{align}

\CYtwo{Now, from Eq.~\eqref{Eq: exactly two mu eq}, we have}
%
\begin{align}\label{Eq: tilde omega upper bound 0}
\widetilde{\omega}^{(\mu_+)}\le\frac{1-\epsilon}{2}\CYtwo{\proj{0}}+\frac{\epsilon\id}{4}.
\end{align}
%
Also, when $x=1$, \CYtwo{just like Eq.~\eqref{Eq: each mu has at least one a index},} there is a unique value \CYtwo{$a_1(\mu_+)$} such that $D(\CYtwo{a_1(\mu_+)}|1,\mu_+)=1$.
\CYtwo{Substituting this to Eq.~\eqref{Eq: local formula S'} with $x=1$ and $a=a_1(\mu_+)$ gives}
\begin{align}\label{Eq: tilde omega upper bound +}
\widetilde{\omega}^{(\mu_+)}&=\CYtwo{D(\CYtwo{a_1(\mu_+)}|1,\mu_+)\widetilde{\omega}^{(\mu_+)}}\nonumber\\
&\le\CYtwo{\sum_\mu D(a_1(\mu_+)|1,\mu)\widetilde{\omega}^{(\mu)}}\nonumber\\
&=\frac{1-\epsilon}{2}H\proj{\CYtwo{a_1(\mu_+)}}H+\frac{\epsilon\id}{4}.
\end{align}
Without loss of generality, let us assume $\CYtwo{a_1(\mu_+)=0}$ (and hence $H\ket{\CYtwo{a_1(\mu_+)}}=\ket{+}$), as the case of $\CYtwo{a_1(\mu_+)=1}$ can be shown in the same way.
\CYtwo{Combining Eqs.~(\ref{Eq:eigenvalue bound}-\ref{Eq: tilde omega upper bound +}),}
we obtain
\begin{align}\label{Eq:omega_+ 0 upper bound}
\frac{2-\epsilon}{8}\proj{\omega_+}&\le\CYtwo{\norm{\widetilde{\omega}^{(\mu_+)}}_\infty\proj{\omega_+}\le\widetilde{\omega}^{(\mu_+)}}\nonumber\\
&\le\frac{1-\epsilon}{2}\proj{0}+\frac{\epsilon\id}{4},
\end{align}
and, similarly,
\begin{align}\label{Eq:omega_+ + upper bound}
\frac{2-\epsilon}{8}\proj{\omega_+}&\le\frac{1-\epsilon}{2}\proj{+}+\frac{\epsilon\id}{4}.
\end{align}
\CYthree{Sandwiching Eq.~\eqref{Eq:omega_+ 0 upper bound}  by $\bra{1}$ and $\ket{1}$, we obtain
\begin{align}\label{Eq: proof relation 001}
|\braket{1}{\omega_+}|^2\le\frac{2\epsilon}{2-\epsilon}.
\end{align}
Similarly, sandwiching Eq.~\eqref{Eq:omega_+ + upper bound} by $\bra{-}$ and $\ket{-}$ gives
\begin{align}\label{Eq: proof relation 002}
|\braket{-}{\omega_+}|^2\le\frac{2\epsilon}{2-\epsilon}.
\end{align}}
Using the above relations, a direct computation shows that
\begin{align}
\sqrt{\frac{4\epsilon}{2-\epsilon}}&\ge\sqrt{2}|\braket{-}{\omega_+}|=|\braket{0}{\omega_+}-\braket{1}{\omega_+}|\nonumber\\
&\ge|\braket{0}{\omega_+}|-|\braket{1}{\omega_+}|\ge|\braket{0}{\omega_+}|-\sqrt{\frac{2\epsilon}{2-\epsilon}}.
\end{align}
In other words,
\begin{align}\label{Eq:contradition relation 001}
|\braket{0}{\omega_+}|^2\le\frac{(6+4\sqrt{2})\epsilon}{2-\epsilon}.
\end{align}
Finally, we note that $\ket{\omega_+}$ is a normalised state. Using the normalisation condition and \CYthree{Eq.~\eqref{Eq: proof relation 001}} again, we obtain 
\begin{align}
1=|\braket{0}{\omega_+}|^2+|\braket{1}{\omega_+}|^2\le|\braket{0}{\omega_+}|^2+\frac{2\epsilon}{2-\epsilon}.
\end{align}
Consequently,
\begin{align}\label{Eq:contradition relation 002}
\frac{2-3\epsilon}{2-\epsilon}\le|\braket{0}{\omega_+}|^2.
\end{align}
Importantly, what we have obtained is 
\begin{align}
\text{$\mathbfcal{P}^{(\epsilon)}\in{\bf JM}_{\rm post}$ $\Rightarrow$ Eqs.~\eqref{Eq:contradition relation 001} $\&$~\eqref{Eq:contradition relation 002} hold simultaneously.}
\end{align}
Hence, if Eqs.~\eqref{Eq:contradition relation 001} and~\eqref{Eq:contradition relation 002} {\em cannot} hold simultaneously, we must have $\mathbfcal{P}^{(\epsilon)}\notin{\bf JM}_{\rm post}$.
This is the case if we have \mbox{$(6+4\sqrt{2})\epsilon<2-3\epsilon$},
which is equivalent to
\begin{align}\label{Eq: if condition}
\epsilon<\frac{2}{9+4\sqrt{2}},
\end{align}
which is thus a sufficient condition for $\mathbfcal{P}^{(\epsilon)}\notin{\bf JM}_{\rm post}$.
The proof is thus completed.
\end{proof}

\subsection{Supplemental Material V: Proof of Result~\ref{Result:no-go2}}
\begin{proof}
Suppose the opposite: $\mathbfcal{E}$ is complementary and expressible by Eq.~\eqref{Eq:pre-processing} in the main text.
Namely, we can write
$\mathcal{E}_{a|x} = \sum_\lambda P(a|x,\lambda)\mathcal{G}_\lambda\circ\mathcal{N}_x$ $\forall\,a,x$ for some unital channels $\{\mathcal{N}_x\}_x$, conditional probabilities $\{P(a|x,\lambda)\}_{a,x,\lambda}$, and instrument $\{\mathcal{G}_\lambda\}_\lambda$.
Then, similar to Eq.~\eqref{Eq:post-computation001}, we have (again, \CYtwo{both $\mathbfcal{E}$'s input and output spaces are the $d$-dimensional system $S$})
\begin{align}\label{Eq:computation001}
\left(\mathcal{E}_{a|x}\otimes\mathcal{I}_{S'}\right)(\Phi^+_{SS'}) &= (\mathcal{I}_S\otimes\mathcal{N}_x^t)\left(\tau_{a|x}^{\rm LHS}\right)\nonumber\\
&=\sum_\mu D(a|x,\mu)(\mathcal{I}_S\otimes\mathcal{N}_x^t)\left(\omega_{SS'}^{(\mu)}\right),
\end{align}
where $\tau_{a|x}^{\rm LHS}$'s are LHS model [Eq.~\eqref{Eq:clasiscal model}], and Eq.~\eqref{Eq:post-mu} is used.
By writing \mbox{$\mathcal{N}_x(\cdot) = \sum_i K_i^{(x)}(\cdot)K_i^{(x),\dagger}$} be a Kraus representation of $\mathcal{N}_x$, we define \mbox{$\mathcal{N}_x^t(\cdot)\coloneqq\sum_i K_i^{(x),t}(\cdot)K_i^{(x),\dagger,t}$},
which is again a unital channel since $\mathcal{N}_x$ is so.
\CYtwo{Also, just like in the proof of Result~\ref{Result:no-go}, Lemma~\ref{lemma:rank-one projective} and Fact~\ref{fact:unitary equi} jointly imply that, without loss of generality, we can assume $\mathbfcal{E}$ takes the form of Eqs.~\eqref{Eq:rank-one01} and~\eqref{Eq:rank-one02}, since it is rank-one projective.} 

Now, for $x=0$, after taking the partial trace ${\rm tr}_{S'}$, we have
\begin{align}
\proj{a}_S/d = \sum_\mu D(a|0,\mu){\rm tr}_{S'}\left(\omega_{SS'}^{(\mu)}\right)\quad\forall\,a.
\end{align}
For every $\mu$, there is a value of $a$, 
\CYtwo{denoted by $a_0(\mu)$ as in Eq.~\eqref{Eq: each mu has at least one a index},}
achieving \mbox{$D(\CYtwo{a=a_0(\mu)}|0,\mu)=1$}.
We conclude that
\begin{align}\label{Eq: tr_S' diagonal in computational basis}
\proj{\CYtwo{a_0(\mu)}}/d\ge{\rm tr}_{S'}\left(\omega_{SS'}^{(\mu)}\right)\quad\forall\,\mu.
\end{align}
Namely, each ${\rm tr}_{S'}\left(\omega_{SS'}^{(\mu)}\right)$ is diagonal in the basis $\{\ket{a}\}_a$ [just like Eq.~\eqref{Eq:omega local condition} but now for the system $S$].
Finally, for $x=1$, combining Eqs.~\eqref{Eq:rank-one02} and~\eqref{Eq:computation001} and applying ${\rm tr}_{S'}$ imply that
\begin{align}
\proj{\phi_b}_S = \mathcal{E}_{b|1}(\id_S) = d\sum_\mu D(b|1,\mu){\rm tr}_{S'}\left(\omega_{SS'}^{(\mu)}\right)\quad\forall\,b,
\end{align}
where the right-hand side is diagonal in the basis $\{\ket{a}\}_a$ due to Eq.~\eqref{Eq: tr_S' diagonal in computational basis}.
This thus contradicts with Eq.~\eqref{Eq: complementary rank-one projective condition}, which implies that $\proj{\phi_b}$ {\em must} have some non-vanishing off-diagonal terms in the basis $\{\ket{a}\}_a$ [see the argument around Eq.~\eqref{Eq:off-diagonal terms}].
Hence, complementary $\mathbfcal{E}$ cannot be expressed by Eq.~\eqref{Eq:pre-processing} in the main text.

Finally, to prove that \mbox{$0\le\epsilon<2/\left(9+4\sqrt{2}\right)$} implies \mbox{$\mathbfcal{P}^{(\epsilon)}\notin{\bf JM}_{\rm pre}^{\rm unital}$},
one can substitute $\mathbfcal{P}^{(\epsilon)}$ into Eq.~\eqref{Eq:computation001} and trace out the system $S'$ (instead of $S$). This gives
\begin{align}\label{Eq: local formula S}
\frac{1-\epsilon}{2}H_{S}^x\proj{a}_{S} H_{S}^x+\frac{\epsilon\id_{S}}{4}= \sum_\mu D(a|x,\mu)\tr_{S'}\left(\omega_{SS'}^{(\mu)}\right)\;\;\forall\,a,x,
\end{align}
which is mathematically identical to Eq.~\eqref{Eq: local formula S'} except it is in the system $S$ now.
This thus means the same proof for Result~\ref{Result:noise robustness} in Supplemental Material V can go through to show that Eq.~\eqref{Eq: if condition} is a sufficient condition of $\mathbfcal{P}^{(\epsilon)}\notin{\bf JM}_{\rm pre}^{\rm unital}$. 
\end{proof}

\clearpage
\subsection{Supplemental Material VI: Dual SDP and Proof of Result~\ref{Result:operational meaning}}


Before proving Result~\ref{Result:operational meaning}, we first solve the dual SDP of Eq.~\eqref{Eq:SDP} in the main text.
We note that \CYtwo{this SDP}
is mathematically equivalent to the one studied in Ref.~\cite{Hsieh-IP} [specifically, Eq.~(3) therein].
Hence, using Eq.~(19) and Lemma 8 in Ref.~\cite{Hsieh-IP}, we conclude that \CYtwo{[recall from Eq.~\eqref{Eq:Choi operator diff input output} that $\Choi_{a|x}^{(\mathbfcal{E})}\coloneqq(\mathcal{E}_{a|x}\otimes\mathcal{I}_{\Sin'})(\Phi^+_{\Sin\Sin'})$ $\forall\,a,x$]}
\CYtwo{
\begin{lemma}{\em\cite{Hsieh-IP}}\label{Lemma:Hsieh-IP}
Consider a pair of $\Sin\to\Sout$ instruments $\mathbfcal{E}$.
Then, there is a parameter $0\le\eta_*<1$ such that, for every outcome index pair $(a,b)$ and \mbox{$\eta_*\le\eta<1$}, we have 
\begin{equation}\label{Eq:dualSDP}
\begin{split}
C_{(a,b)}(\mathbfcal{E}) = \frac{1}{1-\eta}\min_{
R,L
}\quad&\tr\left(R\Choi_{a|0}^{\CY{(\mathbfcal{E})}}\right)+\tr\left(L\Choi_{b|1}^{\CY{(\mathbfcal{E})}}\right)\\
{\rm s.t.}\quad&R\ge0,\,L\ge0,\\
&(1-\eta)\id_{\Sout\Sin'}\le R+L\le\id_{\Sout\Sin'},
\end{split}
\end{equation}
which is minimising over operators $R$ and $L$ acting on the bipartite system $\Sout\Sin'$.
\end{lemma}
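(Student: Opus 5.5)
We plan to prove Lemma~\ref{Lemma:Hsieh-IP} by Lagrangian duality applied to the primal SDP in Eq.~\eqref{Eq:SDP}, followed by a rescaling argument. The rescaling is what produces the restricted feasible region $(1-\eta)\id\le R+L\le\id$.

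\textbf{Unconstrained dual.} Fix $(a,b)$ and write $J_0\coloneqq\Choi_{a|0}^{(\mathbfcal{E})}$ and $J_1\coloneqq\Choi_{b|1}^{(\mathbfcal{E})}$. The primal is to maximise ${\rm tr}(P)$ subject to $P\ge0$, $J_0-P\ge0$ and $J_1-P\ge0$. We introduce multipliers $R,L\ge0$ for the last two constraints. The Lagrangian is ${\rm tr}(P)+{\rm tr}[R(J_0-P)]+{\rm tr}[L(J_1-P)]$. Taking the supremum over $P\ge0$ is finite exactly when $R+L\ge\id$. This gives the dual
\[
\min\{{\rm tr}(RJ_0)+{\rm tr}(LJ_1)\,|\,R,L\ge0,\,R+L\ge\id\}.
\]
Strong duality holds by Slater's condition. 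For example, $P=0$ is primal feasible, and any $R=L=\id$ is strictly dual feasible. The primal optimum is also attained, since the feasible set is compact. So $C_{(a,b)}(\mathbfcal{E})$ equals this unconstrained dual value.

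\textbf{Restricting to an upper bound.} Next we show that the dual optimum can be taken with $R+L\le\kappa\id$ for some finite $\kappa$ that works uniformly over all $(a,b)$. Given any optimal pair, we may replace $R$ by $\widetilde R\le R$ and $L$ by $\widetilde L\le L$ while keeping $\widetilde R+\widetilde L\ge\id$. This does not increase the objective, because $J_0,J_1\ge0$. A concrete way to do this is to compress with the projector onto the supports. On the kernel of $J_0$ we can enlarge $R$ freely at no cost, and similarly for $L$ on the kernel of $J_1$. So an optimal solution can be built with spectra bounded in terms of $J_0$, $J_1$ and their supports.

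This is the step I expect to be the main obstacle. The difficulty is that the objective is only bounded below by $\lambda_{\min}^{+}(J_x)$ times the trace on the support, so naive boundedness arguments fail off the supports. To handle it, I would first show that the set of optimal solutions intersected with $\{R+L\le k\id\}$ is nonempty for some $k$. Then I would maximise this over the finitely many pairs $(a,b)$ to obtain a uniform $\kappa$. As a fallback, I would cite Lemma~8 of Ref.~\cite{Hsieh-IP} directly, since the SDP is mathematically identical to the one treated there.

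\textbf{Rescaling.} Finally, set $\eta_*\coloneqq1-1/\kappa$ and take any $\eta\ge\eta_*$. Substitute $R=R'/(1-\eta)$ and $L=L'/(1-\eta)$. The constraint $R+L\ge\id$ becomes $R'+L'\ge(1-\eta)\id$, and the objective picks up the prefactor $1/(1-\eta)$. The bound $R+L\le\kappa\id$ translates into $R'+L'\le(1-\eta)\kappa\,\id\le\id$. The added upper constraint therefore removes no optimal point, which yields Eq.~\eqref{Eq:dualSDP}.

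For the converse inequality, note that the restricted feasible set is contained in the rescaled unconstrained one. The restricted minimum is therefore at least the unconstrained minimum. Combined with the previous paragraph, the two minima are equal.
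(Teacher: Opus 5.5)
Your Lagrangian dual $\min\{\tr(RJ_0)+\tr(LJ_1)\,|\,R,L\ge0,\,R+L\ge\id\}$, the weak-duality direction and the rescaling are all correct. The proof breaks at the step you flagged. ``Given any optimal pair'' assumes this dual attains its infimum, with bounded $R+L$. Slater's theorem with the strictly dual-feasible point $R=L=\id$ only gives equal optimal values and attainment on the \emph{primal} side. The Slater condition that would guarantee dual attainment needs a primal point with $P>0$, $J_0-P>0$ and $J_1-P>0$. No such point exists once a Choi operator is singular, e.g.\ for every rank-one projective pair.

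This cannot be repaired. The right-hand side of Eq.~\eqref{Eq:dualSDP} is a minimum over a compact set, so equality at even one $\eta<1$ would give, after dividing the minimiser by $1-\eta$, an optimal pair for your unconstrained dual. The lemma is therefore equivalent to dual attainment, and attainment can fail. Take $d_{\rm out}=1$ as in Appendix B, with $E_{0|0}=\mathrm{diag}(1/4,1)$, $E_{0|1}=\proj{+}/2$ and $E_{1|x}=\id-E_{0|x}$. Then $J_0\coloneqq\Choi_{0|0}^{(\mathbfcal{E})}=\mathrm{diag}(1,4)/8$ and $J_1\coloneqq\Choi_{0|1}^{(\mathbfcal{E})}=\proj{+}/4$.
\begin{itemize}
\item $0\le P\le J_1$ forces $P=p\proj{+}$ with $p\le1/4$, and $P\le J_0$ forces $p\le1/\bra{+}J_0^{-1}\ket{+}=1/5$. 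Hence $C_{(0,0)}(\mathbfcal{E})=1/5$, with $P_*=\proj{+}/5$.
\item For any dual-feasible pair, $\tr(RJ_0)+\tr(LJ_1)-1/5=\tr[R(J_0-P_*)]+\tr[L(J_1-P_*)]+\tr[(R+L-\id)P_*]$, a sum of non-negative terms.
\item An optimal pair would therefore need $L\ket{+}=0$ (since $J_1-P_*=\proj{+}/20$), and $R\propto\proj{k}$ with $\ket{k}\propto4\ket{0}+\ket{1}$ spanning $\ker(J_0-P_*)$.
\item It would also need $(R+L-\id)\ket{+}=0$, i.e.\ $R\ket{+}=\ket{+}$. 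This is impossible, because $\ket{k}$ is not parallel to $\ket{+}$.
\end{itemize}
So the value $1/5$ is approached only along sequences with $\norm{L}_\infty\to\infty$, since $R$ stays bounded because $J_0>0$. For \emph{every} $\eta<1$ the right-hand side of Eq.~\eqref{Eq:dualSDP} strictly exceeds $C_{(0,0)}(\mathbfcal{E})$. Your heuristic that an optimiser can be built with spectra bounded in terms of $J_0$, $J_1$ and their supports is exactly what fails here. The paper gives no argument beyond citing Eq.~(19) and Lemma~8 of Ref.~\cite{Hsieh-IP}, and your fallback citation cannot establish an equality that fails in this example.

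What does hold is all that Result~\ref{Result:operational meaning} uses. First, your weak-duality paragraph holds for every $\eta\in[0,1)$. The right-hand side always upper-bounds $C_{(a,b)}(\mathbfcal{E})$, and it tends to it as $\eta\to1$, because any near-optimal unconstrained pair becomes feasible once $1/(1-\eta)\ge\norm{R+L}_\infty$. This suffices for the ``$\Leftarrow$'' direction.

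Second, exact equality with a uniform $\eta_*$ holds whenever $C_{(a,b)}(\mathbfcal{E})=0$, and there your compression idea works with an explicit bound:
\begin{itemize}
\item A nonzero $v\in{\rm supp}\,J_0\cap{\rm supp}\,J_1$ would give $\epsilon\proj{v}\le J_0,J_1$ for small $\epsilon>0$. So $C=0$ forces $\ker J_0+\ker J_1$ to be the whole space.
\item Hence $\Pi_{\ker J_0}+\Pi_{\ker J_1}\ge\lambda\id$ for some $\lambda\in(0,2]$.
\item Set $R=\Pi_{\ker J_0}/\lambda$ and $L=\Pi_{\ker J_1}/\lambda$. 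These give zero objective and satisfy $\id\le R+L\le(2/\lambda)\id$.
\item After your rescaling they are feasible for all $\eta\ge1-\lambda/2$. Taking the largest threshold over the finitely many pairs $(a,b)$ gives $\eta_*$.
\end{itemize}
This is precisely what the ``$\Rightarrow$'' direction needs. You should prove the lemma in this restricted form, plus the limit statement, rather than try to close the bounded-optimiser step when $C_{(a,b)}(\mathbfcal{E})>0$.
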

}
Now, recall from the main text that 
\begin{align}
\CY{{\bm\pi}(\mathbfcal{E})}\coloneqq\left\{\mathcal{E}_{\pi_x(a)|x}\right\}_{a,x}
\end{align} 
is a new instrument pair
with two given permutations \mbox{${\bm\pi}=\{\pi_0,\pi_1\}$} (one for each $x$ value) used to relabel the originally given instrument pair $\mathbfcal{E}$'s classical outcomes. 
Then, using Eq.~\eqref{Eq:dualSDP}, one can show the following lemma: 
\begin{lemma}\label{lemma: complementarity permutation}
A pair of \CYtwo{$\Sin\to\Sout$} instruments $\mathbfcal{E}$ is complementary if and only if 
\begin{align}
C_{(a,a)}\big(\CY{{\bm\pi}(\mathbfcal{E})}\big)=0\quad\forall\,a,{\bm\pi}.
\end{align} 
\end{lemma}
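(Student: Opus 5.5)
This is essentially a relabelling argument on top of Result~\ref{Result:Characterising complementarity}; the dual SDP of Lemma~\ref{Lemma:Hsieh-IP} is not needed here, and only the primal definition in Eq.~\eqref{Eq:SDP} is used. The key observation is that, for any ${\bm\pi}=\{\pi_0,\pi_1\}$, the Choi operators of the relabelled pair satisfy
\begin{align}
\Choi^{({\bm\pi}(\mathbfcal{E}))}_{a|x}=\Choi^{(\mathbfcal{E})}_{\pi_x(a)|x}\quad\forall\,a,x.
\end{align}
This holds because ${\bm\pi}(\mathbfcal{E})$ is built from exactly the same filters, only indexed differently. Comparing feasible sets in Eq.~\eqref{Eq:SDP} then gives
\begin{align}
C_{(a,b)}\big({\bm\pi}(\mathbfcal{E})\big)=C_{(\pi_0(a),\pi_1(b))}(\mathbfcal{E})\quad\forall\,a,b,{\bm\pi}.
\end{align}

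For the direction ``$\Rightarrow$'', suppose $\mathbfcal{E}$ is complementary. Result~\ref{Result:Characterising complementarity} gives $C_{(a',b')}(\mathbfcal{E})=0$ for every pair $(a',b')$. Applying the identity above with $b=a$, we get $C_{(a,a)}({\bm\pi}(\mathbfcal{E}))=C_{(\pi_0(a),\pi_1(a))}(\mathbfcal{E})=0$ for all $a$ and all ${\bm\pi}$.

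For the direction ``$\Leftarrow$'', fix an arbitrary pair $(a',b')$. We want a permutation pair ${\bm\pi}$ and an index $a$ with $\pi_0(a)=a'$ and $\pi_1(a)=b'$. A concrete choice is $\pi_0=\mathrm{id}$ and $\pi_1$ equal to the transposition swapping $a'$ and $b'$ (the identity if $a'=b'$), with $a=a'$. The hypothesis then gives $0=C_{(a',a')}({\bm\pi}(\mathbfcal{E}))=C_{(a',b')}(\mathbfcal{E})$. Since $(a',b')$ was arbitrary, $\max_{a,b}C_{(a,b)}(\mathbfcal{E})=0$, and Result~\ref{Result:Characterising complementarity} says $\mathbfcal{E}$ is complementary.

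The only delicate point is the bookkeeping of outcome sets. The permutations act on the outcome labels, so the transposition requires $a'$ and $b'$ to lie in a common label set. If the two instruments have different numbers of outcomes, I would first pad both to a common index set $\{0,\dots,d-1\}$ with zero filters; this is the range used in Result~\ref{Result:Characterising complementarity}. The padding changes neither complementarity nor any $C_{(a,b)}$, since a zero filter's Choi operator forces $P=0$, making the corresponding $C=0$. Everything else is immediate from the definitions.
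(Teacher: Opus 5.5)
Your proposal is correct and follows essentially the same route as the paper. Both directions use the relabelling identity $\Choi^{({\bm\pi}(\mathbfcal{E}))}_{a|x}=\Choi^{(\mathbfcal{E})}_{\pi_x(a)|x}$ together with Result~\ref{Result:Characterising complementarity}, and for ``$\Leftarrow$'' you choose ${\bm\pi}$ with $\pi_0(a)=a$ and $\pi_1(a)=b$, exactly as the paper does. Like you, the paper's proof uses only the primal SDP here, despite the sentence before the lemma; your explicit transposition and your outcome-set padding remark simply make precise a bookkeeping point that the paper leaves implicit.
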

\begin{proof}
({\em Proof of the direction ``$\Leftarrow$''})
Suppose that \mbox{$C_{(a,a)}\big(\CY{{\bm\pi}(\mathbfcal{E})}\big)=0$} for every $a$ and permutation pair ${\bm\pi}$.
For each outcome index pair $(a,b)$, one can consider any ${\bm\pi}$ with $\pi_0(a)=a$ and $\pi_1(a)=b$ to achieve \CYtwo{$\mathcal{E}_{a|0}=\mathcal{E}_{\pi_0(a)|0}$ and $\mathcal{E}_{b|1}=\mathcal{E}_{\pi_1(a)|1}$, meaning that}
\begin{align}
\CYtwo{J_{a|0}^{(\mathbfcal{E})}=J_{a|0}^{({\bm\pi}(\mathbfcal{E}))}\quad\&\quad J_{b|1}^{(\mathbfcal{E})}=J_{a|1}^{({\bm\pi}(\mathbfcal{E}))},}
\end{align}
\CYtwo{where 
\begin{align}\label{Eq:Choi permutation}
\Choi_{a|x}^{({\bm\pi}(\mathbfcal{E}))}=(\mathcal{E}_{\pi_x(a)|x}\otimes\mathcal{I}_{\Sin'})(\Phi^+_{\Sin\Sin'})\quad\forall\,a,x
\end{align} 
are the Choi operators of the instrument pair $\CY{{\bm\pi}(\mathbfcal{E})}$.}
Using Eq.~\eqref{Eq:SDP} in the main text, this thus implies that
\begin{align}
C_{(a,b)}(\mathbfcal{E})=C_{(a,a)}\big(\CY{{\bm\pi}(\mathbfcal{E})}\big)=0.
\end{align}
Using Result~\ref{Result:Characterising complementarity}, this thus means $\max_{a,b}C_{(a,b)}(\mathbfcal{E})=0$ and hence $\mathbfcal{E}$ is complementary.

({\em Proof of the direction ``$\Rightarrow$''})
Now, suppose $\mathbfcal{E}$ is complementary.
Result~\ref{Result:Characterising complementarity} implies 
\begin{align}
C_{(a,b)}(\mathbfcal{E})=0\quad\forall\,a,b.
\end{align}
Then, for every $a$ and permutation ${\bm\pi}$, by setting $a'=\pi_0(a)$ and $b'=\pi_1(a)$, we have 
\begin{align}
C_{(a,a)}\big(\CY{{\bm\pi}(\mathbfcal{E})}\big)=C_{(a',b')}(\mathbfcal{E})=0,
\end{align}
which thus concludes the proof.
\end{proof}

Combining Lemma~\ref{Lemma:Hsieh-IP} and Lemma~\ref{lemma: complementarity permutation}, 
we can now prove Result~\ref{Result:operational meaning} as follows.

\subsubsection{Proof of Result~\ref{Result:operational meaning}}
\begin{proof}
({\em Proof of the direction ``$\Leftarrow$''})
Suppose there is some $0\le\eta<1$ such that $P_{\rm error}^{(\eta)}\big(\CY{{\bm\pi}(\mathbfcal{E})}\big)=0$ for all permutation pairs ${\bm\pi}$.
Consider a given permutation pair ${\bm\pi}$ and an index $a$.
Then, from Eq.~\eqref{Eq:ave error probability_main} in the main text, there exists a POVM $\{Q_0^{(a),({\bm\pi})},Q_1^{(a),({\bm\pi})},Q_\emptyset^{(a),({\bm\pi})}\}$ with \mbox{$Q_\emptyset^{(a),({\bm\pi})}\le\eta\id_{\CYtwo{\Sout\Sin'}}$} achieving \CYtwo{(note that $\Sin$ here is the system $S$ in the main text)}
\begin{align}
0=\CY{p_0}{\rm tr}\left(Q_1^{(a),({\bm\pi})}\Choi_{a|0}^{({\bm\pi}(\mathbfcal{E}))}\right)+\CY{p_1}{\rm tr}\left(Q_0^{(a),({\bm\pi})}\Choi_{a|1}^{({\bm\pi}(\mathbfcal{E}))}\right).
\end{align}
Since $\CY{p_x>0}$ as assumed in the task, we obtain
\begin{align}\label{Eq:computation for dual SDP proof}
{\rm tr}\left(Q_1^{(a),({\bm\pi})}\Choi_{a|0}^{({\bm\pi}(\mathbfcal{E}))}\right)=0={\rm tr}\left(Q_0^{(a),({\bm\pi})}\Choi_{a|1}^{({\bm\pi}(\mathbfcal{E}))}\right).
\end{align}
Now, $Q_\emptyset^{(a),({\bm\pi})}\le\eta\id_{\CYtwo{\Sout\Sin'}}$ implies 
\begin{align}
(1-\eta)\id_{\CYtwo{\Sout\Sin'}}&\le\CYtwo{\id_{\CYtwo{\Sout\Sin'}}-Q_\emptyset^{(a),({\bm\pi})}}\nonumber\\
&=Q_0^{(a),({\bm\pi})}+Q_1^{(a),({\bm\pi})}\le\id_{\CYtwo{\Sout\Sin'}}.
\end{align}
This means $Q_0^{(a),({\bm\pi})}$ and $Q_1^{(a),({\bm\pi})}$ form a feasible solution to the minimisation in the dual SDP in Eq.~\eqref{Eq:dualSDP} when we put $\Choi_{a|0}^{({\bm\pi}(\mathbfcal{E}))}$ and $\Choi_{a|1}^{({\bm\pi}(\mathbfcal{E}))}$ in its objective function.
Together with Eq.~\eqref{Eq:computation for dual SDP proof}, we conclude that
\begin{align}
C_{(a,a)}\big(\CY{{\bm\pi}(\mathbfcal{E})}\big)=0.
\end{align}
Since the above argument works for {\em every} ${\bm\pi}$ and $a$, we thus conclude that $\mathbfcal{E}$ is indeed complementary by using Lemma~\ref{lemma: complementarity permutation}.

({\em Proof of the direction ``$\Rightarrow$''})
Now, to show the opposite direction, suppose $\mathbfcal{E}$ is complementary.
Then, \CYtwo{using Lemma~\ref{Lemma:Hsieh-IP},}
there is a parameter $0\le\eta_*<1$ such that, for {\em every} pair $(a,b)$, there are some $R^{(a,b)}\ge0$ and $L^{(a,b)}\ge0$ with 
\begin{align}\label{Eq: 1-eta condition for R L}
(1-\eta_*)\id_{\CYtwo{\Sout\Sin'}}\le R^{(a,b)}+L^{(a,b)}\le\id_{\CYtwo{\Sout\Sin'}}
\end{align} 
that can achieve
\begin{align}\label{Eq: Result 6 proof 001}
\tr\left(R^{(a,b)}\Choi_{a|0}^{\CY{(\mathbfcal{E})}}\right)=0=\tr\left(L^{(a,b)}\Choi_{b|1}^{\CY{(\mathbfcal{E})}}\right).
\end{align}

Now, consider a given permutation pair ${\bm\pi}=(\pi_0,\pi_1)$ and a given index $a$.
By letting 
\begin{align}
a'=\pi_0(a)\quad\&\quad b'=\pi_1(a)
\end{align} 
and use Eqs.~\eqref{Eq:Choi operator diff input output} and~\eqref{Eq:Choi permutation}, we have 
\begin{align}
\Choi_{a'|0}^{\CY{(\mathbfcal{E})}}=\Choi_{a|0}^{({\bm\pi}(\mathbfcal{E}))}\quad\&\quad\Choi_{b'|1}^{\CY{(\mathbfcal{E})}}=\Choi_{a|1}^{({\bm\pi}(\mathbfcal{E}))}.
\end{align}
Using Eq.~\eqref{Eq: Result 6 proof 001} for the pair $(a',b')$, we obtain
\begin{align}\label{Eq: Result 6 proof 002}
\tr\left(R^{(a',b')}\Choi_{a|0}^{({\bm\pi}(\mathbfcal{E}))}\right)=0=\tr\left(L^{(a',b')}\Choi_{a|1}^{({\bm\pi}(\mathbfcal{E}))}\right).
\end{align}
Hence, for the fixed ${\bm\pi}$ and $a$, we can define the following three-outcome POVM:
\begin{align}
&Q_0^{(a),({\bm\pi})}\coloneqq L^{(a',b')},\\ 
&Q_1^{(a),({\bm\pi})}\coloneqq R^{(a',b')},\\
&Q_\emptyset^{(a),({\bm\pi})}\coloneqq\id_{\CYtwo{\Sout\Sin'}}-R^{(a',b')}-L^{(a',b')}.
\end{align}
Since $(1-\eta_*)\id_{\CYtwo{\Sout\Sin'}}\le R^{(a,b)}+L^{(a,b)}\le\id_{\CYtwo{\Sout\Sin'}}$ [Eq.~\eqref{Eq: 1-eta condition for R L}], one can directly check that the above is a valid POVM with 
\begin{align}
Q_\emptyset^{(a),({\bm\pi})}\le\eta_*\id_{\CYtwo{\Sout\Sin'}},
\end{align}
meaning that this POVM is indeed allowed in the task with the parameter $\eta_*$.  
Consequently, we have
\begin{align}
0=\CY{p_0}{\rm tr}\left(Q_1^{(a),({\bm\pi})}\Choi^{({\bm\pi}(\mathbfcal{E}))}_{a|0}\right)+\CY{p_1}{\rm tr}\left(Q_0^{(a),({\bm\pi})}\Choi^{({\bm\pi}(\mathbfcal{E}))}_{a|1}\right).
\end{align}
Note that the above argument works for every $a$ with the given ${\bm\pi}$.
Together with Eq.~\eqref{Eq:ave error probability_main} in the main text, we conclude that, for the parameter $0\le\eta_*<1$, we have
\begin{align}
P_{\rm error}^{(\eta_*)}\big(\CY{{\bm\pi}(\mathbfcal{E})}\big)=0.
\end{align}
Finally, one can further note that the whole argument works for {\em every} ${\bm\pi}$.
This thus completes the proof.
\end{proof}

\bibliography{Ref.bib}

\end{document}